\def\id#1{\ensuremath{\mathit{#1}}}
\let\idit=\id
\def\idrm#1{\ensuremath{\mathrm{#1}}}
\def\idtt#1{\ensuremath{\mathtt{#1}}}
\newsavebox{\cmbox}
\newenvironment{commbox}{
  \begin{lrbox}{\cmbox}
    \begin{minipage}{.9\textwidth}
}{\end{minipage}
  \end{lrbox}
}
\newenvironment{myempty}{\begin{commbox}}{\end{commbox}}
\newenvironment{inalign}{\begin{math}\catcode`&=9}{\catcode`&=4\end{math}}
\crefname{section}{Sec.}{Sections}
\crefname{chapter}{Chapter}{Chapters}
\crefname{algorithm}{Algorithm}{Algorithms}
\crefname{table}{Table}{Tables}
\crefname{figure}{Fig.}{Figures}
\crefname{definition}{Def.}{Definitions}
\crefname{lemma}{Lem.}{Lemmas}
\crefname{proposition}{Prop.}{Propositions}
\crefname{theorem}{Thm.}{Theorems}
\crefname{remark}{Remark}{Remarks}
\crefname{problem}{Problem}{Problems}
\crefname{observation}{Observation}{Observations}
\crefname{equation}{Eq.}{Equations}
\setlist{noitemsep}
\setlist{%
topsep = 0.25\baselineskip,
parsep = 0.125\baselineskip
}
\setlist[enumerate]{ labelsep=.25pc, leftmargin=1.5pc } 
\setlist[enumerate,1]{ label= (\arabic*), ref=\arabic*}
\setlist[enumerate,2]{ label= (\roman*),ref  = \roman*}
\setlist[enumerate,3]{ label= (\alph*), ref  = (\alph*)}
\setlist[itemize]{ leftmargin=1.5pc }
\setlist[description]{ font=\sffamily\bfseries }
\renewcommand{\subsubsection}[1]{\textbf{#1}\hspace{0.25em}}
\renewcommand{\-}{\mbox{\textit{-}}}
\newcommand{\nat}{\mathbb{N}}
\renewcommand{\leq}{\leqslant}
\renewcommand{\le}{\leqslant}
\renewcommand{\preceq}{\preccurlyeq}
\newcommand{\algo}[1]{\mbox{\textsf{#1}}}
\newcommand{\eps}{\varepsilon}
\newcommand{\dallersymbol}{\$} 
\newcommand{\daller}{\dallersymbol} 
\newcommand{\mathdaller}{\mbox{`$\daller$'}} 
\DeclareMathOperator{\polylog}{\mathrm{polylog}}
\newcommand{\kw}[1]{\textbf{#1}}
\newcommand{\sig}[1]{\mathcal{#1}}
\newcommand{\mop}[1]{\mbox{\rm\texttt{#1}}}
\newcommand{\op}[1]{\mathtt{#1}}
\newcommand{\by}{\times}
\renewcommand{\iff}{\mathbin{\,\Leftrightarrow\,}}
\newcommand{\iffdef}{\mathbin{\,\stackrel{\textrm{def}}{\iff}\,}}
\newcommand{\indicator}[1]{\mathbb{I}\kern-0.1em\left[\kern0.1em{#1}\kern0.1em\right]}
\newcommand{\set}[1]{\{\kern0.05em#1\kern0.05em\}}
\newcommand{\sete}[1]{\{\kern0.2em#1\kern0.2em\}}
\newcommand{\rkk}[1]{^{\kern.5pt\textrm{#1}}}
\renewcommand{\@algocf@capt@plain}{above}
\newcommand{\Commentblock}[1]{\hfill$\rhd$\ \textit{#1}}
\newcommand{\Commentblockl}[1]{\kern0.5em$\rhd$\ \textit{#1}\ $\lhd$}
\newcommand{\iIf}[2]{\textbf{if} {#1} \textbf{then}\hspace{0.125em}{\relax #2}}
\newcommand{\up}{{\mbox{--}}}
\newcommand{\dw}{+}
\newcommand{\down}{\dw}
\newcommand{\inv}{^{-1}}
\newcommand{\lex}{\idrm{lex}}
\newcommand{\pos}{\idrm{pos}}
\newcommand{\idx}{\op{idx}}
\newcommand{\val}{\op{val}}
\newcommand{\Suf}[1][(T)]{\idtt{Suf}{#1}}
\newcommand{\Path}[1][(G)]{\idtt{Path}{#1}}
\newcommand{\rext}[1]{\overrightarrow{#1}} 
\newcommand{\str}{\op{str}}
\newcommand{\U}[1][\delta]{\mathcal{U}_{#1}} 
\newcommand{\lab}{\op{lab}}
\newcommand{\src}{\op{src}}
\newcommand{\dst}{\op{dst}}
\newcommand{\tail}{\src}
\newcommand{\head}{\dst}
\newcommand{\cano}[1][\delta]{\op{cano}_{#1}}
\newcommand{\precsym}{\op{precsym}}
\newcommand{\Pos}{\op{pos}}
\newcommand{\Rnk}{\op{rnk}}
\newcommand{\CS}[1][\delta]{\sig{CS}_{#1}(G)}
\newcommand{\SP}[1][\delta]{\sig{SP}_{#1}(G)}
\newcommand{\CE}[1][\delta]{\sig{CE}_{#1}(G)}
\newcommand{\RSA}{\mathsf{SA}}
\newcommand{\vobj}[2][\delta]{\bm{#2}_{#1}}
\newcommand{\trivcsuf}[1][\delta]{\vobj[#1]{S}}
\newcommand{\trivsuf}[1][\delta]{\vobj[#1]{S}}
\newcommand{\trivedge}[1][\delta]{\vobj[#1]{f}}
\newcommand{\dom}{\op{dom}}
\newcommand{\less}{<}
\newcommand{\leqp}[1][\delta]{\mathbin{\preceq_{#1}}}
\newcommand{\lessp}[1][\delta]{\mathbin{\prec_{#1}}}
\DeclareMathOperator{\lequp}{\mbox{$\leqp[\up]$}}
\DeclareMathOperator{\leqdw}{\mbox{$\leqp[\down]$}}
\DeclareMathOperator{\lessdw}{\mbox{$\lessp[\down]$}}
\DeclareMathOperator{\leqpos}{\leqp[\pos]}
\DeclareMathOperator{\leqlex}{\leqp[\lex]}
\DeclareMathOperator{\lesslex}{\lessp[\lex]}
\newcommand{\leqedge}{\leq^E}
\newcommand{\lessedge}{<^E}
\newcommand{\leqe}[1][\delta]{{\leqedge_{#1}}}
\newcommand{\leqeout}{\leqe[+]}
\newcommand{\lesse}[1][\delta]{{\lessedge_{#1}}}
\newcommand{\lessein}{{\lesse_{-}}}
\newcommand{\lesseout}{{\lesse_{+}}}
\newcommand{\leqepos}[1][-]{\leq^{E}_{#1,\pos}}
\newcommand{\leqelex}[1][+]{\leq^{E}_{#1,\lex}}
\newcommand{\repr}[1][\delta]{{\mop{repr}_{#1}}}
\newcommand{\reprup}{{\repr[\up]}}
\newcommand{\reprdw}{{\repr[\dw]}}
\newcommand{\valu}[1][]{\reprup}
\newcommand{\shortestup}{\op{shortest}_\up}
\newcommand{\nleaves}{\op{nleaves}}
\newcommand{\isprimary}{\mop{is-primary}}
\newcommand{\isprimaryup}{\isprimary_\up}
\newcommand{\isprimarydw}{\isprimary_\dw}
\newcommand{\Stree}[1][T]{\id{Stree}({#1})}
\newcommand{\CDAWG}{\mathit{CDAWG}}
\newcommand{\CDAWGo}{\CDAWG(T)}
\newcommand{\CDAWGd}{\CDAWG^{-}_{\Pi}(T)}
\newcommand{\SA}{\id{SA}}
\newcommand{\mypreceq}{{\leqdw}}
\newcommand{\myprec}{{\lessdw}}
\newcommand{\qir}[1]{{\widetilde{#1}}}
\newcommand{\GLPF}{\idit{GLPF}}
\newcommand{\GLPFd}{\GLPF_{\scriptsize\mypreceq}}
\newcommand{\V}[1][]{\mathcal{V}_{#1}}
\newcommand{\E}[1][]{\mathcal{E}_{#1}}
\renewcommand{\root}{\id{root}}%
\newcommand{\sink}{\id{sink}}%
\newcommand{\suf}{\id{suf}}%
\newcommand{\N}[1][\delta]{N_{#1}}
\newcommand{\In}{\N[-]}
\newcommand{\Out}{\N[+]}
\newcommand{\EE}[2]{\mbox{$\mathcal{E}_{#2}^{#1}$}}
\newcommand{\EP}[1][\delta]{\EE{\star}{#1}}
\newcommand{\ES}[1][\delta]{\overline{\EE{\star}{#1}}}
\newcommand{\ESX}[1][\delta]{\ES[#1]\cup\set{\trivedge[#1]}}
\newcommand{\T}[1][\delta]{\sig T_{#1}}
\newcommand{\EPU}[1][]{\EP[\up]}
\newcommand{\ESU}[1][]{\ES[\up]}
\newcommand{\EPD}[1][]{\EP[\dw]}
\newcommand{\ESD}[1][]{\ES[\dw]}
\newcommand{\aref}[1]{Algorithm\kern0.25em\ref{#1}} 
\newcommand{\nofbox}[1]{#1}
\begin{document}


\title{
  Optimally Computing Compressed Indexing Arrays 
  Based on the Compact Directed Acyclic Word Graph
}
\titlerunning{Optimally Computing 
  Compressed Indexing Arrays}
\author{
Hiroki Arimura\inst{1}
\and 
Shunsuke Inenaga\inst{2}
\and
Yasuaki Kobayashi\inst{1}
\and 
Yuto Nakashima\inst{2}
\and 
Mizuki Sue\inst{1}
}
\institute{%
  Graduate School of IST, Hokkaido University, Japan
\\\url{{arim,sue}@ist.hokudai.ac.jp}\orcidID{0000-0002-2701-0271} 
\\\url{koba@ist.hokudai.ac.jp}\orcidID{0000-0003-3244-6915} 
\and 
Department of Informatics, Kyushu University, Japan
\\\url{inenaga@inf.kyushu-u.ac.jp}\orcidID{0000-0002-1833-010X} 
\\\url{nakashima.yuto.003@m.kyushu-u.ac.jp}\orcidID{0000-0001-6269-9353} 
}
\authorrunning{H. Arimura et al.}
%
\maketitle              
\thispagestyle{plain}

\begin{abstract}
  In this paper, we present the first study of the computational complexity of converting an automata-based text index structure, called the Compact Directed Acyclic Word Graph (CDAWG), of size $e$ for a text $T$ of length $n$ into other text indexing structures for the same text, suitable for highly repetitive texts: 
the \textit{run-length BWT} of size~$r$, 
the \textit{irreducible PLCP array} of size~$r$, and
the \textit{quasi-irreducible LPF array} of size~$e$,
as well as the \textit{lex-parse} of size~$O(r)$ and
the \textit{LZ77-parse} of size~$z$,
where $r, z \le e$.
As main results, we showed that the above structures can be optimally computed from either the CDAWG for $T$ stored in read-only memory or its self-index version of size $e$ without a text in $O(e)$ worst-case time and words of working space.
To obtain the above results, we devised techniques for enumerating a particular subset of suffixes in the lexicographic and text orders using the forward and backward search on the CDAWG by extending the results by Belazzougui \textit{et al.} in~2015. 

\keywords{
 Highly-repetitive text
  \and suffix tree
  \and longest common prefix
}
\end{abstract}


\section{Introduction}
\label{sec:intro}

\subsubsection{Backgrounds.} 
Compressed indexes for repetitive texts, which can compress a text beyond its entropy bound, have attracted a lot of attention in the last decade
in information retrieval~\cite{navarro2021indexing:ii}.
Among them, the most popular and powerful compressed text indexing structures~\cite{navarro2021indexing:ii} are
the \textit{run-length Burrows-Wheeler transformation} (RLBWT)~\cite{navarro2021indexing:ii}
of size $r$,
the \textit{Lempel-Ziv-parse} (LZ-parse)~\cite{navarro:ochoa:2020approx} of size $z$, and finally the \textit{Compact Directed Acyclic Word Graph} (CDAWG)~\cite{blumer:jacm1987cdawg} of size $e$.
It is known~\cite{navarro2021indexing:ii} that their size parameters $r$, $z$, and $e$ can be much smaller than the information theoretic upperbound of a text for highly-repetitive texts such as
collections of genome sequences and markup texts~\cite{navarro2021indexing:ii}. 
Among these repetition-aware text indexes, we focus on the CDAWG for a text $T$, which is a minimized compacted finite automaton with $e$ transitions for the set of
all suffixes of $T$~\cite{blumer:jacm1987cdawg};
It is 
the edge-labeled DAG obtained
from the suffix tree for $T$ by merging all isomorphic subtrees~\cite{gusfield1997book:stree}, and 
can be constructed from $T$ in linear time and
space~\cite{navarro2021indexing:ii}.
The relationships between the size parameters $r$, $z$, and $e$ of the RLBWT, LZ-parse, and CDAWG has been studied by, e.g.~\cite{radoszewski:rytter2012structure:cdawg:thuemorse,belazzougui2015cpm:composite,kempa:kociumaka2022resolution,bannai2013converting,mantaci2017measuring:burrows:fibonacci,brlek2019burrows:thuemorse};
However, it seems that the actual complexity of conversion the CDAWG into the other structures in sublinear time and space has not been explored yet~\cite{navarro2021indexing:ii}.

\subsubsection{Research goal and main results.} 
In this paper, we study for the first time the conversion problem from the CDAWG for $T$ into the following compressed indexing structures for $T$:
\begin{enumerate}[(i)]
\item the \textit{run-length BWT} (RLBWT)~\cite{navarro2021indexing:ii} of size~$r \le e$; 
  
\item the \textit{irreducible} \textit{permuted longest common prefix} (PLCP)  array~\cite{karkkainen2009permuted} of size~$r$; 
  
\item the \textit{quasi-irreducible} \textit{longest previous factor} (LPF) array~\cite{crochemore2008computing:lpf} of size~$e$ (\cref{sec:prelim}); 

\item the \textit{lex-parse}~\cite{navarro:ochoa:2020approx} with size at most $2r = O(r)$; and
  
\item {LZ-parse}~\cite{navarro:ochoa:2020approx} with size~$z \le e$.
\end{enumerate}

After introducing some notions and techniques,
we present in \cref{sec:algbwt} and \ref{sec:alglpf} 
efficient algorithms for solving the conversion problem from the CDAWG into the aforementioned compressed indexing structures.
We obtain the following results. 

\begin{trivlist}\item[]
  \textbf{Main results (Thm.~\ref{thm:algbwt:complexity}, \ref{thm:alglpf:complexity}, and \ref{thm:algglpf:parse:lex:lz}).}\:  
  For any text $T$ of length $n$ over an integer alphabet $\Sigma$, we can solve the conversion problems from the CDAWG $G$ of size $e$ for $T$ into the above compressed index array structures (i)--(v) for the same text in $O(e)$ worst-case time using $O(e)$ words of working space, where an input $G$ is given in the form of either the CDAWG of size $e$ for $T$ stored in read-only memory, or its self-index version~\cite{belazzougui:cunial2017cpm:representing,takagi:spire2017lcdawg} of size $O(e)$ without a text. 
\end{trivlist}

\subsubsection{Techniques.}
To obtain the above results, we devise in \cref{sec:tech} techniques for enumerating a \textit{canonical subset of suffixes} in the lexicographic and text orders using the \textit{forward and backward DFS} on the CDAWG by extending by~\cite{belazzougui2015cpm:composite}.

\subsubsection{Related Work.}
\label{sec:related}
On the relationships between
parameters $r$, $z$, and $z$ against the text length $n$, 
Belazzougui and Cunial~\cite{belazzougui2015cpm:composite} have shown that $r \le e$ and $z \le e$ hold.
Kempa~\cite{kempa2019optimal} showed that the compressed PLCP and CSA and LZ-parse can be computed in $O(n/\log_\sigma n + r \polylog(n))$ time and space from RLBWT-based index of size $r$ and $T$.  
It is shown by \cite{kempa:kociumaka2022resolution} that the RLBWT of size $r$ for $T$ can be computed from the LZ77-parse of size $z$ for the same text in $r = O(z \polylog(n))$ time and space and $r = O(z \:\log^2 n)$.
Concerning to conversion from the CDAWG $G$ for $T$, we
observe that
$G$ 
can be converted into the LZ78-parse of size $z_{78}\ge z$ in $O(e + z_{78}\log z_{78})$ time and space via an $O(e)$-sized grammar~\cite{belazzougui:cunial2017cpm:representing} on $G$~\cite{bannai2013converting}. 
\textit{Discussions.} 
For some
texts, $e$ can be as small as $r$ or $z$ although $e$ can be polynomially larger than $z$ for other texts~\cite{kempa:kociumaka2022resolution}. 
For the class of Thue-Morse words, \footnote{The $n$-th Thue-Morse word is $\tau_n = \varphi^n(0)$ for the morphism $\varphi(0) = 01$ and $\varphi(1) = 10$. } Radoszewski and Rytter~\cite{radoszewski:rytter2012structure:cdawg:thuemorse} showed that $e = O(\log n)$, while Brlek~\textit{et al}.~\cite[Theorem~2]{brlek2019burrows:thuemorse} showed that $r = \Theta(\log n)$. Hence, for such a class, there is a chance that our $O(e)$-time method can run as fast as other $O(r \polylog(n))$-time methods for some conversion problem. 
On the contrary, Mantaci~\textit{et al}.~\cite{mantaci2017measuring:burrows:fibonacci} showed that $e = \Theta(\log n)$ and $r = O(1)$ for Fibonacci words.


\section{Preliminaries}
\label{sec:prelim}

We prepare the necessary notation and definitions in the following sections. For precise definitions, see the literature~\cite{gusfield1997book:stree,navarro2021indexing:ii} or the full paper~\cite{arimura:full2023thispaper}. 

\subsubsection{Basic definitions and notation.}
For any integers $i\le j$,
the notation $[i..j]$ or $i..j$ denotes the interval $\set{i,i+1, \dots, j}$ of integers, and $[n]$ denotes $\set{1,\dots,n}$.
 For a string $S[1..n] = S[1]\cdots S[n]$ of length $n$ and any $i\le j$, we denote $S[i..j] = S[i]S[i+1]\cdots S[j]$.
 Then, $S[1..j]$, $S[i..j]$, and $S[i..|S|]$ are a \textit{prefix}, a \textit{factor}, and a \textit{suffix} of $S$, resp. The \textit{reversal} of $S$ is $S\inv = S[n]\cdots S[1]$.
 Throughout,
 we assume a string $T[1..n] \in \Sigma^n$, , called a \textit{text}, over an alphabet $\Sigma$ with symbol order $\leq_\Sigma$, which is terminated by the \textit{end-marker} $T[n] = \mathdaller$ such that $\daller \leqlex a$ for $\forall a \in \Sigma$.
$\Suf[(T)] = \set{T_1, \dots, T_n} \subseteq \Sigma^+$ denotes the set of all of $n$ non-empty suffixes of $T$, where $T_p := T[p..n]$ is the $p$-th suffix with position $p$.
 For any suffix $S \in \Sigma^*$ in $\Suf$, we define: 
\begin{enumerate*}[(i)] 
\item $\Pos(S) := n + 1 - |S|$ gives the starting position $S$.
\item $\Rnk(S)$ gives the lexicographic rank of $S$ in $\Suf$.
\end{enumerate*}
$lcp(X, Y)$ denotes the \textit{length of the longest common prefix} of strings $X$ and $Y$.
In what follows, we refer to any suffix as $S$, any factors of $T$ as $X, Y, U, L, P,\ldots$, nodes of a graph as $v, w, \ldots$, and edges as $f, g, \ldots$, which are possibly subscripted. 

\subsubsection{String order and extension.}
A \textit{string order} is any total order $\preceq$ over strings in $\Sigma^*$.
Its \textit{co-relation} $\preceq^\idrm{co}$ is defined by $X \preceq^\idrm{co} Y \iffdef X^{-1} \preceq Y^{-1}$. The order $\preceq$ is said to be \textit{extensible} if
$\forall a \in \Sigma, \forall X, Y \in \Sigma^*, a X \preceq a Y \iff X \preceq Y$, and \textit{co-extensible} if its co-order $\preceq^\idrm{co}$ is extensible.%
\footnote{%
Our extensible order seems slightly different from~\cite{navarro:ochoa:2020approx}, but essentially the same.  
} 
We denote by $\leqlex$ the \textit{lexicographic order} over $\Sigma^*$ extending $\leq_\Sigma$ over $\Sigma$, and
by $\leqpos$ the \textit{text order} defined as:
$X \leqpos Y \iff |X|\ge |Y|$.
Both of $\leqpos$ and $\leqlex$ are extensible~\cite{navarro:ochoa:2020approx}, while $\leqpos$ is co-extensible.
A factor $X$ in $T$ is
\textit{left-maximal} if we can prepend some symbols to $X$
without changing the set of its end-positions in $T$~\cite{belazzougui2015cpm:composite,blumer:jacm1987cdawg}. 

\subsubsection{Compact directed acyclic word graph.}
We assume that the reader is familiar with the suffix tree and the CDAWG~\cite{gusfield1997book:stree,blumer:jacm1987cdawg}.
The \textit{suffix tree}~\cite{gusfield1997book:stree}
for a text $T[1..n]$, denoted by $\Stree$, is the compacted trie for the set $\Suf$ of all suffixes of $T$. 
The CDAWG~\cite{blumer:jacm1987cdawg} for a text $T$, denoted $\CDAWGo$, is an edge-labeled DAG $G = (\V,$ $\E, \suf, \root, \sink)$ 
obtained from $\Stree$ by merging all the isomorphic subtrees, where
$\V$, $\E$, and $\suf$ are sets of
\textit{nodes}, \textit{labeled edges}, and \textit{suffix links},
$\N[-](v)$ and $\N[+](v)$, resp., denotes the sets of incoming and outgoing edges at node $v$.
$\root$ and $\sink\in \V$ are the distinguished nodes with $|\N[-](v)|=0$ and $|\N[+](v)|=0$, resp. 
Each edge $f = (v, X, w)$
goes from node $\tail(f) = v$ to node $\head(f) = w$ with string label $lab(v) = X \in \Sigma^+$.
$\Path[(u,v)]$ denotes the set of all paths from node $u$ to node $v$, whose elements are called \textit{$u$-to-$v$ paths}.
We denote the \textit{size} of $G$ by $e := e_R + e_L$, where 
$e_R := |\E(G)|$ and $e_L := |\suf_G|$. 
The CDAWG can be stored in $O(e)$ words of space by representing each edge label by its length and a pointer to $T$. 
In \cref{fig:stree:cdawg}, we show examples of the suffix tree and the CDAWG for the same text $T = aabaababb$ over $\Sigma = \set{a, b, \daller}$.

\subsubsection{Indexing arrays.}
$\SA, PLCP, LPF \in [n]^n$ and 
$BWT \in \Sigma^n$
denote
the suffix~\cite{navarro2021indexing:ii}, permuted longest common prefix~\cite{karkkainen2009permuted}, longest previous factor~\cite{crochemore2008computing:lpf}, and the BWT arrays for a text $T[1..n]$:
for any rank $k$ and position $p \in [n]$, 
$\SA[k]$ stores the position $p$ of the suffix with rank $k$;
$BWT[k]$ is $T[n]$ if $\SA[k] = 1$ and $T[\SA[k]-1]$ otherwise;
$PLCP[p]$ is $0$ if $p = \SA[1]$ and $lcp(T_p, T_q)$ for $q = \SA[\SA^{-1}[p]-1]$ otherwise; 
$LPF[p]$ is $\max(\set{ lcp(T_p, T_q) \mid T_q \leqpos T_p, q \in [n]  } \cup \set{0})$.
$RLBWT$ is the run-length encoded $BWT$.
The \textit{irreducible PLCP} is obtained from $PLCP$ by sampling such rank-value pairs $(p, PLCP[p])$ that the rank $i = SA^{-1}[p]$ satisfies $BWT[i] \not= BWT[k-1]$.
The \textit{lex-parse}~\cite{navarro:ochoa:2020approx} and \textit{LZ-parse}~\cite{navarro:ochoa:2020approx} of $T$ are obtained from $PLCP$ and $LPF$, resp., as the partition $T = F_1\dots F_u$ of $T$ with $u$ phrases $F_i = T[p_i..p_i+\ell_i-1]$ such that $p_1 = 1$ and 
$\ell_i = \max(L[p_i], 1), \forall i \in [u]$, where $L$ is either $PLCP$ or $LPF$.

\begin{figure}[t]
  \centering  
  \begin{subfigure}[b]{0.28\textwidth}
    \centering
    \nofbox{\includegraphics[width=.75\hsize]{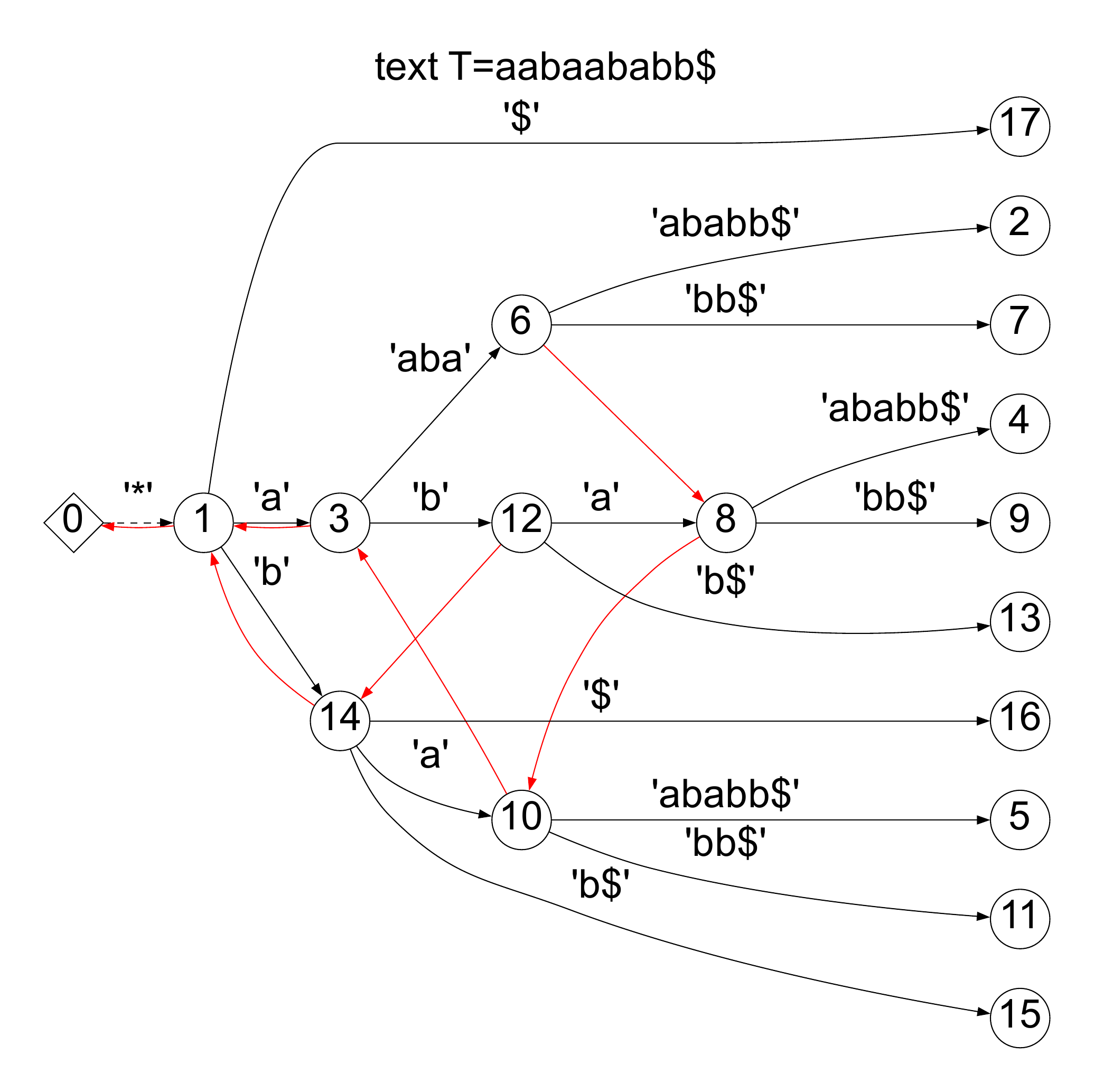}}
  \end{subfigure}
  \begin{subfigure}[b]{0.35\textwidth}
    \centering
    \nofbox{\includegraphics[width=.99\textwidth]{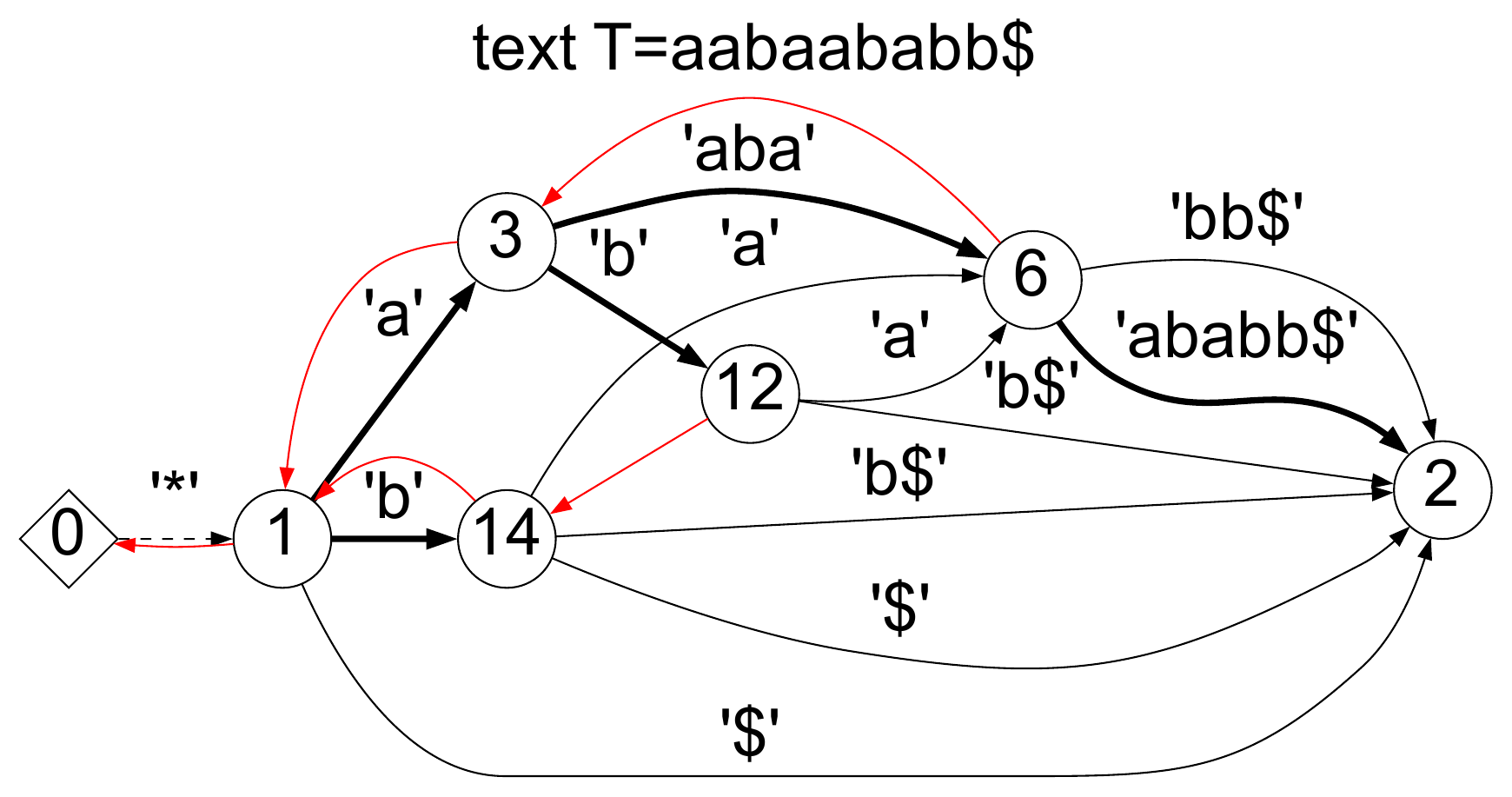}}
  \end{subfigure}
  \begin{subfigure}[b]{0.35\textwidth}
    \centering
    \nofbox{\includegraphics[width=.99\textwidth]{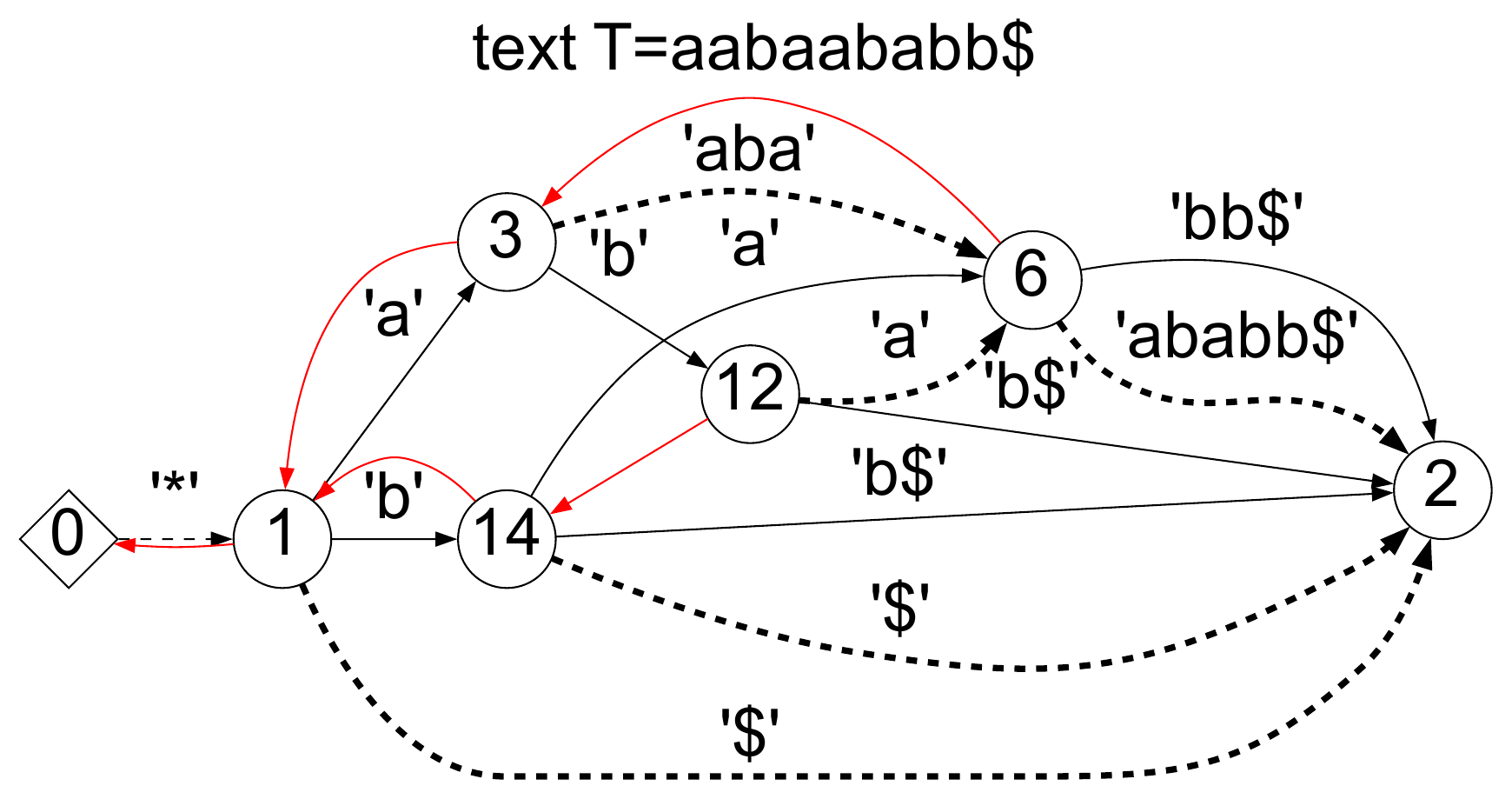}}
 \end{subfigure}
  \caption{Suffix tree (left) and CDAWG (middle, right) for $T = aabaababb\daller$. Thick and dashed lines indicate $(-)$- and $(+)$-primary edges forming the forward and backward search trees $\T[-]$ and $\T[+]$ in \cref{sec:tech:search:tree}, resp. Red lines indicate suffix links. 
  }
  \label{fig:stree:cdawg}
  \label{fig:ordered:cdawg:dfstree}
\end{figure}


\section{Techniques}
\label{sec:tech}
We introduce novel techniques for generating elements of a compressed indexing array using so-called \textit{canonical suffixes} in $\leqlex$ and $\leqpos$ based on the forward and backward DFS on the CDAWG by extending~the results by~\textit{et al.}~\cite{belazzougui2015cpm:composite,belazzougui:cunial2017cpm:representing}. 

\subsubsection{Our approach.}
We employ the one-to-one correspondence between all of $n$ root-to-sink paths in $\Path$ and all of $n$ non-empty suffixes of $T$ in $\Suf \subseteq \Sigma^*$.
Due to the determinism of $\CDAWGo$ as a DFA~\cite{blumer:jacm1987cdawg}, we can interchangeably use a path $\pi = (f_1, \dots, f_k)$ in $\Path$ and a factoring $S = X_1\dots X_k = \str(\pi)$ of a suffix $S$ in $\Suf$, where $X_i = \lab(f_i)$ for all $i \in [k]$.%
\footnote{
The correspondence between $\Path$ and $\Suf$  can be extended to that between all root-to-node paths in $G$ and all \textit{right-maximal factors}~\cite{blumer:jacm1987cdawg} of $T$, but not used here. }

A basic idea of our approach for computing a sparse indexing array $\widetilde{A}: Dom \to Range$ with domain $Dom$ is to represent $\widetilde{A}$ as the graph 
\begin{align}
  \widetilde{A} &= \sete{ (\op{idx}(S), \op{val}(S)) \mid S \in \sig S}
  \subseteq Dom\by Range
\end{align}
of array $\widetilde{A}$ with the set of indexes $\sete{ \op{idx}(S) \mid S \in \sig S}$ using a combination of 
\begin{enumerate}[(i)]
\item a subset $\sig S\subseteq \Path$ of root-to-sink paths, 
\item a mapping $\idx: \sig S \to Dom$ that assigns the index, and
\item a mapping $\val: \sig S \to Range$ that assigns the value.  
\end{enumerate}

For actual computation of $\widetilde{A}$ on the CDAWG $G$ for $T$, we make the DFS based on a pair $\Pi = (\leqp[\up], \leqp[\dw]) = (\leqpos, \leqlex)$ of ordering over paths of $G$. Two ordering have different roles. For example, if we want to compute the run-length BWT for $T$ (\cref{sec:algbwt}),
the set of primary edges (in the sense of \cite{blumer:jacm1987cdawg}) w.r.t.~the first ordering $\leqpos$ defines a spanning tree $\T[]$ over $G$ from the root, whereas the second ordering $\leqlex$ specifies the order of traversal. Finally, the set of secondary edges w.r.t.~$\leqpos$ provides a collection $\sig C$ of target values to seek in the DFS. Actually, we can extract an equal-letter run from each secondary edge in constant time.

On the contrary, if we want to compute the sparse version of LPF array (\cref{sec:alglpf}), we make the backward DFS of $G$ from the sink based on the pair $\Pi = (\leqpos, \leqpos)$. Then, the set of primary edges w.r.t.~the second ordering $\leqpos$ defines a spanning tree, the first ordering $\leqpos$ specifies the text order, and the set of secondary edges w.r.t.~the first ordering $\leqpos$ provides a collection of target values, which are the LCP values of neighboring suffixes. The PLCP array can be computed in a similar manner, but with the pair $\Pi = (\leqpos, \leqlex)$. 



\subsubsection{Ordered CDAWGs.}
We assume a pair $\Pi = (\leqp[\up], \leqp[\dw])$ of co-extensible and extensible string orders, where $\leqp[\up]$ and $\leqp[\dw]$ are called \textit{upper} and \textit{lower path orders}, resp.
The ordered CDAWG for $T$ under a pair $\Pi = (\leqp[\up], \leqp[\dw])$ of path orders, denoted by $G = \CDAWG(T; \leqp[\up], \leqp[\dw])$, is the CDAWG $G$ for $T$ whose incoming and outgoing edges are ordered by edge orderings $(\leqe[-], \leqe[+])$ compatible with path orders defined as follows.  
In what follows, for any node $v$, each sign $\delta \in \set{-,+}$ indicates the \textit{side} of $G$, where 
the prefix $(-)$- reads ``upper,'' while $(+)$- reads ``upper'' in what follows.

\subsubsection{Upper/lower sets and their representatives.}
Consider the sets
$\U[-](v) := \sete{ \str(\pi) \mid \pi \in \Path[(\root(G), v)] }$ and $\U[+](v) := \sete{ \str(\pi) \mid \pi \in \Path[(v, \sink(G))] } \subseteq \Sigma^*$
of prefixes and suffixes of all root-to-sink paths.%
\footnote{
The set $\U[-](v)$ has appeared as the equivalence class $[\rext{X}]_R$ of all factors with the same end positions in~\cite{blumer:jacm1987cdawg}, while $\U[+](v)$ was recently introduced by~\cite{belazzougui2015cpm:composite,belazzougui:cunial2017cpm:representing}. Indeed, $\U[-](v)$ encodes the node $v$ itself, while $\U[+](v)$ encodes all end positions of such factors~\cite{blumer:jacm1987cdawg}.
} 
Each member of $\U[-]$ and $\U[+]$ are called \textit{upper} and \textit{lower paths}, resp.
For any $\delta \in \set{-,+}$, we define the \textit{$\delta$-representative} of the set $\U$ by the smallest element $\repr(v)$ of $\U(v)$ under $\leqp$, i.e., $\repr(v) := \min_{\leqp} \U(v)$;
For example, $\reprup(v)$ is the longest strings in $\U[-](v)$ and $\reprdw(v)$ is the lex-first string in $\U[+](v)$ under $\Pi (\leqpos, \leqlex)$. 

\begin{remark}
  For any $\delta \in \set{-, +}$, for any $P \in \sig U_\delta(v)$, $P = \repr(v)$, if and only if $P$ consists of $\EP[\delta]$-edges only.
  Furthermore, any factor $P$ is left-maximal if and only if $P = \reprup(v)$ for some node $v$ under $\leqp[\up] = (\leqpos)$. 
\end{remark}


\subsubsection{Compatible Edge orderings.}
Under the
pair  $\Pi^\pos_\lex = (\leqpos, \leqlex)$ of path orderings, we define the pair $\Gamma^\pos_\lex = (\leqepos[-], \leqelex[+])$ of \textit{edge orderings} by 
\vspace{-0.5\baselineskip}
\begin{align*}
  f_1 \leqepos[-] f_2 
  &\iffdef |\repr[-](v_1)| + |X_1| \geq |\repr[-](v_2)| + |X_2|
  \\f_1 \leqelex[+] f_2 
  &\iffdef \lab(f_1)[1] \less_\Sigma \lab(f_2)[1], 
\end{align*}
$f_i = (v_i, X_i, w_i) \in \E$ be an edge for $i=1,2$.  
Under $\Pi^\pos_\pos = (\leqpos, \leqpos)$,
we define the pair $\Gamma^\pos_\pos = (\leqepos[-], \leqepos[+])$ of edge orderings, where
\begin{math}
f_1 \leqepos[+] f_2 
  \iffdef |X_1| + |\repr[-](w_1)| \geq |X_2| + |\repr[-](w_2)|. 
\end{math}

\begin{toappendix}
  \begin{remark}
Let us denote by $\lesse$ and $\lessp$ the strict orders of $\leqe$ and $\leqp$, resp. 
Under $\Pi^\pos_\lex = (\leqpos, \leqlex)$, the edge ordering $\leqe$ with $\delta \in \set{-,+}$ is \textit{compatible to} the path ordering $\leqp$ w.r.t.~$\Pi^\pos_\lex$ in the following sense
\newcommand{\ucdot}{}
\begin{align}
    f_1 \lesse[-] f_2
    &\iff
    \forall U_1\!\in\U[-](v_1), U_2\!\in\U[-](v_2), 
    U_1\ucdot X_1 \lessp[-] U_1\ucdot X_2,
    \forall f_1, f_2\!\in \N[-](v)
    \label{eq:compatible:up}
    \\
    f_1 \lesse[+] f_2
    &\iff
    \forall U_1\!\in\U[+](w_1), U_2\!\in\U[+](w_2), 
    X_1\ucdot U_1 \lessp[+] X_2\ucdot U_2, 
    \forall f_1, f_2\!\in \N[+](v) 
    \label{eq:compatible:dw}
\end{align}
where \cref{eq:compatible:up} follows from~\cite[Theorem~1]{belazzougui2015cpm:composite} and \cref{eq:compatible:dw} follows the determinism of the CDAWG. 
For $\Pi^\pos_\pos$, $\lessp[-]$ satisfies the above property, while 
$\lessp[+]$ does not.
\end{remark}
\end{toappendix}

\subsubsection{Classification of edges.}
\label{sec:tech:edge:class}
We classify edges in $\N(v)$ using the representative $\repr(v)$ under $\leqe$ as follows. 
For $\delta \in \set{-, +}$, $\delta$-edge $f \in N_{\delta}(v)$ is said to be \textit{$\delta$-primary} if $\repr(v)$ goes through $f$. We denote by $\EP[\delta]$ the set of \textit{all $\delta$-primary} edges, and by $\ES[\delta] := \E - \EP[\delta]$ the set of all \textit{$\delta$-secondary} edges.
The same edge can be both $(-)$-primary and $(+)$-secondary, and \textit{vice versa}.
We remark that
it gives the partition $\E = \EP\uplus\ES$ and equivalence $|\EP[-]| = |\EP[+]|$ and $|\ES[-]| = |\ES[+]|$.%
\footnote{
This is because any nonempty set $\N(v)$ has at least one $\delta$-primary edge.
}
Any suffix $S\in \Suf$ is
\textit{$\delta$-trivial} if it consists only of $\EP$-edges, and \textit{$\delta$-nontrivial} otherwise, where 
the $\delta$-trivial one is unique and denoted by $\trivcsuf$. 
We assume that $\trivcsuf$ has an imaginary edge $\trivedge$
\footnote{
We assume to add imaginary edges $\trivedge[-]$  and $\trivedge[+]$, resp., which are attached above $\root$ and below $\sink$, into the sets $\ES[-]$ and $\ES[+]$.
} 
at the \textit{bottom} if $\delta=(-)$ and at the \textit{top} if $\delta=(+)$.

\subsubsection{Preprocessing.}
We observe that preprocessing of $G = \CDAWGo$ for the information  
necessary in Sec.~\ref{sec:algbwt} and \ref{sec:alglpf} can be efficiently done as follows~\cite{belazzougui2015cpm:composite,belazzougui:cunial2017cpm:representing}.

\begin{lemmarep}[preprocessing]
  \label{lem:preprocessing}
  Under a pair $\leqp[-] = \leqpos$ and $\leqdw$ of co-extensible and extensible string orders, 
  we can preprocess $\CDAWGo$ in $O(e)$ worst-case time and words of space
  to support the following operations in $O(1)$ time
  for $\forall v \in \V$: 
  \begin{enumerate}[(i)]
  \item $|\repr(v)|$ returns the length $\ell \in 0..n$ of $\repr(v)$,
    and $\isprimary_{\delta}(f) \in \set{0,1}$ indicates if
    $f$ is $\delta$-primary for $\delta \in \set{-, +}$ 
    under
    $\leqp[+] \in\set{\leqlex,\leqpos}$;
    
  \item $|\shortestup(v)|$ returns the length $|U|\in [n]$ and $\op{fstsym\-shortest}(v)$ returns the start symbol $U[1]\in \Sigma$ of
    the shortest string $U$ in $\U[-](v)$.

  \item
    $\nleaves(v) \in \nat$ returns the number $|\U[+](v)|$ of lower paths below $v$. 
  \end{enumerate}
\end{lemmarep}

As usual, edges of the CDAWG are assumed to be sorted according to $\leqelex[+]$ and $\leqepos[-]$.
If needed, it can be done in $O(e)$ time and space; the sorting with $\leqelex[+]$ is done by transposing an incident matrix between nodes and edges, while the sorting with $\leqepos[-]$ is done by traversing either the suffix links with a read-only text or suffix links of type-ii nodes~\cite{takagi:spire2017lcdawg} in a self-index.

\begin{toappendix}

  We give the proof of \cref{lem:preprocessing} for preprocessing based on three recursive procedures
\algo{PrepUpward} in \aref{algo:mark:up} and 
\algo{PrepDownward} in \aref{algo:mark:down}. 

\begin{algorithm}[t]
\caption{
  The procedure for precomputing tables
  $\isprimaryup(f)$,
  $|\reprup(v)|$, and 
  $|\shortestup(v)|$
  under $\lequp = (\leqpos)$, 
  invoked with $v = \sink(G)$, and initialized with 
  $\op{Visited}[v]= 0$,
  $\isprimaryup(f)= 0$ for all $v \in \V, f \in \E$.
}\label{algo:mark:up}
\kw{Procedure} \algo{PrepUpward}$(v)$\;
\Begin{
    \uIf{$\op{cache}[v] \not= \bot$}{
      \Return\;
    }
    \uElseIf{$\N[-](v) = \emptyset$}{
      Record $|\reprup(v)| \gets 0$, $|\shortestup(v)| \gets 0$\; 
      \Return\;
    }
    \Else{
      $len_* \gets 0$;\:  
      $len'_* \gets 0$;\:  
      $f_* \gets \bot_{edge}$
      \Comment*{$f \prec \bot_\idrm{edge}$ for all $f\in E$} 
      \For{\kw{each} $f = (w, X, v) \in \N[-](v)$
      } {
        \algo{PrepUpward}$(w)$\;
        $len \gets |\reprup(w)|$; 
        $len' \gets |\shortestup(w)|$\; 
        \iIf{ $len + |X| > len_*$ } {
          $len_* \gets len + |X|$; \: 
          $f_* \gets f$\; 
        }
        \iIf{ $len' + |X| < len'_*$ } {
          $len'_* \gets len' + |X|$\;
        }
      }
      \iIf{$f_* \not= \bot$}{
        $\isprimaryup(f_*) \gets 1$\;
      }
      Record $|\reprup(v)| \gets len_*$ and 
      $|\shortestup(v)| \gets len'_*$\;
      $\op{cache}[v] \gets 1$\; 
      \Return\;
      }
  }
\end{algorithm}

\begin{lemma}[computation of $|\reprup(v)|$ and $|\shortestup(v)|$]
  \label{lem:algo:mark:up}
  The procedure \algo{PrepUpward} in \aref{algo:mark:up}
  preprocesses $\CDAWGo$ in $O(e)$ worst-case time and words of space
  to support the operations 
  $|\reprup(v)|$, 
    $\isprimaryup(f)$, and 
  $|\shortestup(v)|$
  to be answered in $O(1)$ time
  for all $v \in \V$ and $f \in \E$. 
\end{lemma}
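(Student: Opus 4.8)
The statement to prove is Lemma~\ref{lem:algo:mark:up}: that \algo{PrepUpward} correctly computes $|\reprup(v)|$, $\isprimaryup(f)$, and $|\shortestup(v)|$ for all nodes and edges in $O(e)$ time and space. The plan is to argue (a) correctness by a bottom-up induction over the DAG structure, exploiting the recursive formula $\reprup$ and $\shortestup$ satisfy across incoming edges, and (b) the complexity bound by the usual memoization argument, since the \op{cache} table ensures each node is expanded exactly once and each edge of $G$ is examined exactly once.

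**Correctness.** First I would set up the recurrences. By definition $\reprup(v) = \min_{\leqpos}\U[-](v)$, i.e.\ the \emph{longest} string spelled by a $\root$-to-$v$ path, and every such path decomposes as $(\pi', f)$ with $f = (w, X, v) \in \N[-](v)$ and $\pi'$ a $\root$-to-$w$ path; since $\leqpos$ is co-extensible (so longest-suffix choices compose), $|\reprup(v)| = \max_{f=(w,X,v)\in\N[-](v)} \big(|\reprup(w)| + |X|\big)$, with the convention $|\reprup(\root)| = 0$ when $\N[-](v)=\emptyset$. The analogous identity for the \emph{shortest} upper path is $|\shortestup(v)| = \min_{f=(w,X,v)\in\N[-](v)}\big(|\shortestup(w)| + |X|\big)$, again $0$ at the root. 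Moreover, by the definition of $(-)$-primary in Sec.~\ref{sec:tech:edge:class}, the incoming edge $f_* = (w,X,v)$ is $(-)$-primary exactly when $\reprup(v)$ goes through $f_*$, i.e.\ when $f_*$ attains the above maximum; ties are broken consistently by the edge order $\leqepos[-]$, which the loop realizes by keeping the first maximizer encountered (the strict inequality ``$>$'' in the update). Then I would prove by induction on the (finite, acyclic) longest-path distance of $v$ from $\root$ that, whenever \algo{PrepUpward}$(v)$ returns, the recorded values $|\reprup(v)|$, $|\shortestup(v)|$ equal the quantities defined above and $\isprimaryup$ has been set to $1$ on exactly the primary incoming edge of $v$: the base case $\N[-](v)=\emptyset$ is immediate, and the inductive step follows because the \kw{for} loop invokes \algo{PrepUpward}$(w)$ on every in-neighbour $w$ (which has strictly smaller longest-path distance, so the inductive hypothesis applies after that call), then folds these values through the recurrences above. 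Since \algo{PrepUpward} is initially invoked at $\sink(G)$ and $\sink$ reaches every node of $G$ by a reverse walk of incoming edges, every node is eventually expanded, so all table entries are filled correctly.

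**Complexity.** For the running time, the memoization guard (\op{cache}$[v]\neq\bot \Rightarrow$ return) ensures the body of \algo{PrepUpward}$(v)$ past the guard is executed at most once per node $v$. Charge the work of that body: it is $O(1)$ plus $O(1)$ per incoming edge of $v$ (the recursive call on $w$ is accounted separately at $w$). Summing over all $v$, the total is $O(|\V|) + O\!\big(\sum_v |\N[-](v)|\big) = O(|\V| + |\E|) = O(e)$, using $e = e_R + e_L \ge e_R = |\E|$ and $|\V| = O(|\E|)$ for the CDAWG. The recursion depth is bounded by the length of the longest simple path in $G$, hence the call stack uses $O(e)$ words; the tables $\op{cache}$, $\isprimaryup$, $|\reprup(\cdot)|$, $|\shortestup(\cdot)|$ each occupy $O(|\V| + |\E|) = O(e)$ words. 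After preprocessing, each of the three queries is a single table lookup, hence $O(1)$ time. This gives the claimed bounds.

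**Main obstacle.** The routine parts are the complexity accounting and the base case; the part needing care is the correctness of the $\reprup$ recurrence and of the primary-edge identification, i.e.\ justifying that ``longest $\root$-to-$v$ path'' composes edge-wise. This is exactly where co-extensibility of $\leqpos$ (equivalently, the fact that left-maximal factors correspond to $\reprup$-paths, as noted in the Remark following the definition of representatives, and Theorem~1 of~\cite{belazzougui2015cpm:composite}) is invoked: it guarantees that extending the longest $\root$-to-$w$ path by $f=(w,X,v)$ yields the longest $\root$-to-$v$ path among those through $f$, so that a greedy edge-local maximum is a global optimum, and that the attained maximizer is precisely the primary edge. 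Once that structural fact is in hand, the induction is mechanical. I would therefore devote the bulk of the written proof to stating the recurrences cleanly, citing co-extensibility/Theorem~1 for their validity, and then dispatch the induction and the $O(e)$ bound briefly.
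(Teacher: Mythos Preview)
Your proposal is correct and follows essentially the same approach as the paper: both establish the same edge-local recurrences $|\reprup(v)| = \max_{(w,X,v)\in\N[-](v)}(|\reprup(w)|+|X|)$ and $|\shortestup(v)| = \min_{(w,X,v)\in\N[-](v)}(|\shortestup(w)|+|X|)$ (with value $0$ at the root), compute them by a memoized reverse DFS from $\sink$, and mark the maximizing incoming edge as $(-)$-primary. Your write-up is in fact more careful than the paper's brief proof, since you explicitly justify the recurrence via co-extensibility of $\leqpos$ and spell out the induction and the $O(e)$ accounting; the only superfluous detail is the tie-breaking discussion, as distinct elements of $\U[-](v)$ in a CDAWG have pairwise distinct lengths, so no ties arise.
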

  
\begin{proof}
  Firstly, consider the computation of the tables for 
  $|\reprup(v)|$ and $|\shortestup(v)|$ for $\delta = (-)$, 
  which is done by the procedure \algo{PrepUpward} in \aref{algo:mark:up} using the reverse DFS of $\CDAWGo$ starting from $\sink(G)$ and following incoming edges upwards. 
  \algo{PrepUpward} recursively fills the tables using the following recurrence:
  \begin{enumerate}[(a)]
  \item If $v = \root(G)$, obviously $|\reprup(v)| = 0$ and $|\shortestup(v)| = 0$; 
  \item Otherwise, $\N[-](v)\not= \emptyset$, and we have
  \begin{enumerate*}[(i)]
   \item $|\reprup(v)| := \max \sete{ |\reprup(w)| + |X|\mid$ $(w, X, v) \in \N[-](v) }$;  
   \item $|\shortestup(v)| := \min \sete{ |\reprup(w)| + |X| \mid$ $ (w, X, v)$ $\in$ $ \N[-](v) }$.
  \end{enumerate*}
  \end{enumerate}
  Then, for each $f = (v, X, w) \in \N[-](v)$, the table $\isprimaryup(f)$ can be computed as:
  $\isprimaryup(f)$ is $1$ if $|\reprup(v)| + X$ is maximum over all $f \in \N[-](v)$, and~$0$ otherwise.  
\qed
\end{proof}

Next, we consider the computation of the tables for $|\reprdw(v)|$ and $\isprimaryup(f)$ for $\delta = (+)$,

\begin{algorithm}[t]
\caption{
The procedure for precomputing tables
$\isprimarydw(f)$,
$|\reprdw(v)|$, and 
$\op{info\_repr}_+(v)$ 
under $\leqdw$, 
invoked with
$v = \root(G)$
and initialized with 
$\op{cache}[v]\gets 0$ and 
$\isprimarydw(f)\gets 0$, 
for all $v \in \V, f \in \E$.
}\label{algo:mark:down}
\textbf{Input}: A node $v$\; 
\textbf{Working variables}: The information $D = \op{repr\_info}(v)$ of the representative $\reprdw(v)$ for $\U[+](v)$ under $\leqdw$ with length $len = |\reprdw(v)|$
\footnote{
We remark that the concatenation $(X\cdot D)$ and test $(X\cdot D) \:\lessdw\: D_*$ can be implemented to execute in $O(1)$ time for $\leqdw \in \set{\leqpos, \leqlex}$ under an appropriate representation of $D$. 
}\; 
\kw{Procedure} \algo{PrepDownward}$(v; \preceq)$\;
\Begin{
    \iIf{$\op{cache}[v] \not= \bot$}{
      \Return\;
    }
    \uElseIf{$\N[+](v) = \emptyset$}{
      $\op{repr\_info}_+(v)| \gets \eps$;\:
      $|\reprdw(v)| \gets 0$\;
      \Return\Comment*{the empty downward path $\eps$}
    }
    \Else{
      $D_* \gets \bot$;\: 
      $len_* \gets 0$;\:
      $f_* \gets \bot$%
      \Comment*{
        $\pi \leqdw \bot\: (\forall \pi\in \U[+](v))$} 
      \For{\kw{each} $f = (v, X, w) \in \N[+](v)$} {
        \algo{PrepDownward}$(w; \preceq)$
        \Comment*{Recursive call}
        $D \gets \op{repr\_info}_+(w)$; \: 
        $len \gets |\reprdw(w)|$\; 
        \If{ $(X\cdot D) \:\lessdw\: D_*$ } {
          $D_* \gets (X\cdot D)$;\: 
          $len_* \gets |X| + len$;\:
          $f_* \gets f$\; 
        }
      }
      \iIf{$f_* \not= \bot$}{
        $\isprimarydw(f_*) \gets 1$\;
      }
      $\op{repr\_info}_+(v)| \gets D_*$;\:
      $|\reprdw(v)| \gets len_*$\;
      \Return\; 
    }
  }
\end{algorithm}

\begin{lemma}[computation of $|\reprdw(v)|$ and $\isprimaryup(f)$]
  \label{lem:algo:mark:down}
  The procedure \algo{PrepDownward} in \aref{algo:mark:down}
  preprocesses $\CDAWGo$ in $O(e)$ worst-case time and words of space
  to support the operations 
  $|\reprdw(v)|$ and
  $\isprimarydw(f)$ 
  to be answered in $O(1)$ time
  for $\forall v \in \V$
  and $\forall f \in \E$. 
\end{lemma}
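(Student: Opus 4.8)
\end{lemma}

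\begin{proof}
The plan is to show that \algo{PrepDownward}, started at $\root(G)$, correctly evaluates a one-step recurrence for $\reprdw(v) = \min_{\leqdw}\U[+](v)$ over the DAG $\CDAWGo$ with $O(1)$ work per edge, under $\leqdw \in \set{\leqpos,\leqlex}$; the argument mirrors \cref{lem:algo:mark:up}, but runs downward using the lower sets $\U[+]$.

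First I would derive the recurrence. For $v = \sink(G)$ we have $\N[+](v) = \emptyset$ and $\U[+](v) = \set{\eps}$, so $\reprdw(v) = \eps$ and $|\reprdw(v)| = 0$. For $v \ne \sink(G)$, determinism of $\CDAWGo$ as a DFA makes every $v$-to-$\sink(G)$ path start with a unique outgoing edge of $v$, giving $\U[+](v) = \bigcup_{f = (v,X,w) \in \N[+](v)} X \cdot \U[+](w)$; since $\leqdw$ is extensible, applying extensibility once per symbol of $X$ yields $X \cdot P \lessdw X \cdot Q \iff P \lessdw Q$, hence $\min_{\leqdw}(X \cdot \U[+](w)) = X \cdot \reprdw(w)$, and taking the minimum over $f \in \N[+](v)$ gives $\reprdw(v) = \min_{\leqdw}\sete{\lab(f)\cdot\reprdw(\hd(f)) \mid f \in \N[+](v)}$, so that $|\reprdw(v)| = |\lab(f_*)| + |\reprdw(\hd(f_*))|$ for any minimizing edge $f_*$; this is exactly what the inner loop of \aref{algo:mark:down} computes.

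Next I would check that the procedure evaluates this recurrence throughout $G$ and records the primary edges correctly: the $\op{cache}$ flag makes each node expand exactly once, and acyclicity of $\CDAWGo$ makes the recursion well founded; and, by the definition in \cref{sec:tech:edge:class}, $f \in \N[+](v)$ is $(+)$-primary (lies in $\EP[\dw]$) iff $\reprdw(v)$ passes through it, which is precisely the edge $f_*$ that \aref{algo:mark:down} selects as the first minimizer of $\lab(f)\cdot\reprdw(\hd(f))$ via the strict test against the formal top element $\bot$, so setting $\isprimarydw(f_*) \gets 1$ at each $v$ is correct. Here I would note that uniqueness of $f_*$ is not an issue: under $\leqlex$ the candidates $\lab(f)\cdot\reprdw(\hd(f))$ begin with pairwise distinct symbols $\lab(f)[1]$ by determinism, so the minimizer is unique, while under $\leqpos$ several candidates may share the maximum length and ``first wins'' merely fixes a consistent choice, which suffices since \cref{lem:algo:mark:down} asks only for $|\reprdw(v)|$ and one valid primary edge.

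Finally I would account for the resources. Each node is expanded once and each edge is examined once from its tail, so it remains to argue that the concatenation $X \cdot D$ and the comparison $(X\cdot D)\lessdw D_*$ cost $O(1)$: for $\leqdw = \leqpos$ both are arithmetic on lengths, so $D$ is merely a length; for $\leqdw = \leqlex$, the first-symbol observation above reduces $\min_{\leqlex}$ at $v$ to choosing the outgoing edge with smallest $\lab(f)[1]$, so again it suffices to keep $D = |\reprdw(v)|$ (equivalently the starting position $\Pos(\reprdw(v)) = n+1-|\reprdw(v)|$ of the corresponding suffix of $T$, using $\U[+](v) \subseteq \Suf[(T)]$), and the test becomes a single symbol comparison. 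Summing over the $O(e)$ edges gives $O(e)$ time; the tables $\op{cache}$, $\isprimarydw$, $|\reprdw|$, $\op{repr\_info}_+$ together with the recursion stack (depth at most $|\V| = O(e)$) use $O(e)$ words; and afterwards $|\reprdw(v)|$ and $\isprimarydw(f)$ are answered by table lookup in $O(1)$. I expect the main obstacle to be exactly this last point --- certifying constant-time evaluation of the generic test $(X\cdot D)\lessdw D_*$ for both $\leqpos$ and $\leqlex$ --- which the first-symbol argument settles; the passage from the local minimum to the global one is the routine consequence of extensibility already exploited in \cref{lem:algo:mark:up}.
\qed
\end{proof}
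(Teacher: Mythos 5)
Your proof is correct and follows essentially the same route as the paper's: the same downward recurrence $\reprdw(v)=\min_{\leqdw}\sete{\lab(f)\cdot\reprdw(\hd(f))}$ with the cache ensuring one expansion per node, and the same key observation that the comparison $(X\cdot D)\lessdw D_*$ reduces to a length comparison for $\leqpos$ and a first-symbol comparison for $\leqlex$, hence $O(1)$ per edge. Your added remarks on extensibility justifying the recurrence and on the (non-)uniqueness of the minimizing edge are harmless elaborations of what the paper leaves implicit.
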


\begin{proof}
The computation of the operations can be done by \algo{PrepDownward} in \aref{algo:mark:down}
  using the DFS of $\CDAWGo$ starting from $\root(G)$ and following outgoing edges downwards. 
  This can be done by the following recurrence:
  \begin{enumerate}[(a)]
  \item If $v = \sink(G)$, we obviously have $|\reprdw(v)| = 0$; 
  \item Otherwise, $\N[+](v)\not= \emptyset$. Then, in general, we have that
    $|\reprdw(v)|$ is the length of the representative 
    $\reprdw(v)$, which is the minimum of $(X \cdot \reprdw(w))$ over all outgoing edges $f = (v, X, w) \in \N[+](v)$ under $\leqdw$. 
  \end{enumerate}
  Based on the above recurrence, we can compute the table $|\reprdw(v)|$. Once table for $|\reprdw(v)|$ is computed, the predicate $\isprimarydw(f)$ for each $f = (v, X, w) \in \N[-](v)$ can be computed generally as:  $\isprimarydw(f)$ is $1$ if $(X\cdot\reprdw(w))$ is maximum over all $f \in \N[-](v)$ under $\leqdw$, and~$0$ otherwise. 
  For the time complexity of \algo{PrepDownward}, we see that it visits each edge exactly once by using the mark $\op{cache}$ in $O(e)$ space. The bottleneck here is the test for the ordering $X_1 D_1 \leqdw X_2 D_2$ (*) between a pair of concatenated paths $X_iD_i$, where $f = (v, X_i, w_i)$ and $D_i \in \U[+](w_i)$ for every $i=1,2$. In general, this may take more than constant time.
  Fortunately,
we can show the claim that 
for $\leqdw \in \set{\leqpos, \leqlex}$, the test $X_1 D_1 \leqdw X_2 D_2$ (*) can be done in $O(1)$ time using $O(1)$ space to hold the paths
paths $X_iD_i$.
This can be shown  by considering the following two cases:
\begin{enumerate*}[(1)]
\item If $\leqdw = (\leqpos)$, we observe that $X_1 D_1 \leqdw X_2 D_2$ iff $|X_1|+|D_1| \ge |X_2|+|D_2|$. Thus, we can decide $\leqpos$ in $O(1)$ time by holding the length $|X_iD_i|\ge 1$ of a path in $O(1)$ space.
\item If $\leqdw = (\leqlex)$, we observe that $X_1 D_1 \leqdw X_2 D_2$ iff $X_1[1] \leq_\Sigma X_2[1]$. Thus, we can decide $\leqlex$ in $O(1)$ time by holding the first symbol $X_i[1] \in \Sigma$ of a path in $O(1)$ space. 
\end{enumerate*}
Hence, the algorithm can precompute the tables in $O(e)$ time and space. 
\qed
\end{proof}

Now, we have the proof of \cref{lem:preprocessing}.

\begin{trivlist}\item[]
  \textit{Proof of \cref{lem:preprocessing}}
  It is shown in \cite{belazzougui2015cpm:composite} that the operation $\nleaves$ can be answered in $O(1)$ time after preprocessing of $G$ in $O(e)$ time and space. Therefore, the lemma immediately follows from 
  \cref{lem:algo:mark:up} and 
  \cref{lem:algo:mark:down}. 
  \qed 
\end{trivlist}
\end{toappendix}


\subsubsection{Canonical suffixes.}
Next, we introduce a set $\CS$ of canonical suffixes, a set $\SP$ of search paths, 
and a mapping $\cano: \ES\cup\set{\trivedge} \to \CS$. 

\begin{definition}[canonical suffix and search path]\rm
\label{def:canonical:suffix:search:path}  
For $\delta \in \set{-,+}$,
we define
a \textit{$\delta$-canonical suffix} $S$,
its \textit{$\delta$-certificate} $f$, and 
its \textit{$\delta$-search path} $P$, where 
$\pi = (f_1, \dots, f_\ell)$, $\ell \ge 1$ is any path in $\Path$ spelling a suffix $S$ in $\Suf$: 
\begin{enumerate}[(a)]
\item In the case that $S$ is trivial. Then, $S = \trivsuf = \repr(end_\delta)$ and it is always canonical, where $end_- = \sink$ and  $end_+ = \root$. 
  Then, the $\delta$-certificate is $f = \trivedge$, and 
  the $\delta$-search path for $\trivedge$ is $P = \trivsuf$ itself.
  Let $\cano(f) = \trivsuf$. 

\item In the case that $S$ is non-trivial.
$S$ is $\delta$-canonical if it has a \textit{$\delta$-canonical factoring} defined below with some index $k \in [\ell]$ of an edge in $\pi$: 
  \begin{enumerate}[(i)]
  \item $f_\delta = f_k \in \ES$, and moreover, 
    if $\delta = (-)$ then $f_k$ is the highest $\ES[-]$-edge in $S$,
    and if $\delta = (-)$ then $f_k$ is the lowest $\ES[+]$-edge in $S$; 
  \item the upper path $U_\delta = (f_1,\dots, f_{k-1})$ consists only of $\EP[-]$-edges;  
  \item the lower path $L_\delta = (f_{k+1},\dots, f_\ell)$ consists only of $\EP[+]$-edges;  
  \item
  the factoring is $S = \str(U_\delta)\cdot X_\delta \cdot \str(D_\delta)$, where 
  $X_\delta$ is the label of the edge 
  $f_k = (v_\delta, X_\delta, w_\delta)$. 
  \end{enumerate}
Then, the \textit{$\delta$-certificate} is $f = f_k$, and the \textit{$\delta$-search path} for $f$ is the path $P = U \cdot X$ for $\delta = (-)$ and the path $P = X\cdot U$ for $\delta = (+)$, where $X = \lab(f)$. 
  Let $\cano(f) = S$. 
  \hfill $\diamond$
\end{enumerate}
\end{definition}

In what follows, we denote by 
$\CS \subseteq \Suf$ and $\SP \subseteq (\E)^*$ the \textit{set of all $\delta$-canonical suffixes} of $T$ and the \textit{set of all $\delta$-search paths} of $G$, reps., under $\Pi$.  
We remark that any $\delta$-canonical suffix $S \in \CS$ can be recovered by its $\delta$-certificate edge $f$ via $\cano$, and thus the mapping $\cano$ is well-defined. 

\begin{lemma}\label{lem:canonical:unique}
  For any $\delta\in\set{-,+}$ and any $\delta$-canonical suffix $S$, its $\delta$-canonical factoring and $\delta$-certificate $f_\delta$ are unique. Consequently, the mapping $\cano$ is a bijection between $\ES\cup\set{\trivedge}$ and $\CS$. 
\end{lemma}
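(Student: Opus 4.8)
The plan is to reduce the statement to two facts already established in Section~\ref{sec:tech}: (1) the bijective correspondence between the non-empty suffixes of $T$ and the root-to-$\sink$ paths of $G$, coming from the determinism of $\CDAWGo$ as a DFA; and (2) the characterization recorded in the Remark preceding \cref{def:canonical:suffix:search:path}, namely that a root-to-$v$ path consists only of $\EP[-]$-edges if and only if it is $\reprup(v)$, and a $v$-to-$\sink$ path consists only of $\EP[+]$-edges if and only if it is $\reprdw(v)$; in particular, whenever such an all-primary path exists it is unique.

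First I would prove uniqueness of the $\delta$-canonical factoring. The trivial case needs nothing, since $\trivsuf$ is by definition the unique trivial suffix and its certificate is $\trivedge$. So let $S$ be a non-trivial $\delta$-canonical suffix and let $\pi=(f_1,\dots,f_\ell)$ be its unique spelling path; being non-trivial, $\pi$ uses at least one $\ES[\delta]$-edge. Condition~(i) of \cref{def:canonical:suffix:search:path} forces the certificate $f_k$ to be the highest $\ES[-]$-edge of $\pi$ when $\delta=(-)$, and the lowest $\ES[+]$-edge of $\pi$ when $\delta=(+)$; in either case this edge, hence the index $k$, is determined by $\pi$ alone, and the decomposition $S=\str(U_\delta)\cdot X_\delta\cdot\str(L_\delta)$ with $U_\delta=(f_1,\dots,f_{k-1})$, $X_\delta=\lab(f_k)$, $L_\delta=(f_{k+1},\dots,f_\ell)$ then follows. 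Hence the $\delta$-canonical factoring and the certificate $f_\delta=f_k$ are unique.

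Next I would show that $\cano$ is a bijection between $\ES\cup\set{\trivedge}$ and $\CS$. For $f=(v,X,w)\in\ES[\delta]$ I would take the root-to-$\sink$ path $\reprup(v)\cdot f\cdot\reprdw(w)$ and let $S_f$ be the suffix it spells: every edge strictly above $f$ on this path lies in $\EP[-]$ and every edge strictly below $f$ lies in $\EP[+]$, so $f$ is simultaneously the highest $\ES[-]$-edge and the lowest $\ES[+]$-edge of the path, and conditions~(i)--(iv) hold with certificate $f$; thus $S_f\in\CS$, $\cano$ is well defined, and $\cano(f)=S_f$. Injectivity is then immediate from the uniqueness above: if $\cano(f)=\cano(f')$ with $f,f'\in\ES\cup\set{\trivedge}$, the common suffix has both $f$ and $f'$ as its certificate, whence $f=f'$ (the mixed case is impossible, as $\cano(\trivedge)=\trivsuf$ is trivial while $\cano(f')$ with $f'\in\ES$ is not). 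For surjectivity, take $S\in\CS$: if $S$ is trivial then $S=\cano(\trivedge)$; otherwise, in the unique canonical factoring of $S$ with certificate $f_k=(v_k,X_k,w_k)$, conditions~(ii) and~(iii) say that $(f_1,\dots,f_{k-1})$ is a root-to-$v_k$ path of $\EP[-]$-edges and $(f_{k+1},\dots,f_\ell)$ a $w_k$-to-$\sink$ path of $\EP[+]$-edges, so by fact~(2) they equal $\reprup(v_k)$ and $\reprdw(w_k)$, and hence $S=S_{f_k}=\cano(f_k)$.

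The one step that requires care is the well-definedness of $\cano$, i.e.\ checking that $\reprup(v)\cdot f\cdot\reprdw(w)$ really certifies $f$: one must notice that although an edge of $\reprup(v)$ may itself be $(+)$-secondary and an edge of $\reprdw(w)$ may itself be $(-)$-secondary (the $(-)$- and $(+)$-primary/secondary labels being independent), this causes no harm, because $f$ lies strictly below every edge of $\reprup(v)$ and strictly above every edge of $\reprdw(w)$; thus $f$ is at once the highest $\ES[-]$-edge and the lowest $\ES[+]$-edge of the path, which is precisely what conditions~(i)--(iii) demand in both cases $\delta=(-)$ and $\delta=(+)$. Everything else is routine bookkeeping with the path--suffix correspondence and the conditions of \cref{def:canonical:suffix:search:path}.
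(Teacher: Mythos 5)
Your proof is correct. The paper states Lemma~\ref{lem:canonical:unique} without supplying a proof, so there is no argument of the authors to compare against; the one you give — the certificate is pinned down by condition~(i) of \cref{def:canonical:suffix:search:path} as the extremal $\ES$-edge of the unique path spelling $S$, the map $\cano$ is well defined and surjective via the decomposition $\reprup(v)\cdot f\cdot\reprdw(w)$ together with the Remark identifying $\repr(v)$ with the all-$\delta$-primary path, and injectivity follows from uniqueness — is the natural justification and correctly isolates the one delicate point, namely that the $(-)$- and $(+)$-classifications of edges are independent but that this does not affect which edge is the highest $\ES[-]$-edge (resp.\ lowest $\ES[+]$-edge) of the constructed path.
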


\begin{lemmarep}[properties of canonical suffixes]
  \label{lem:canonical:factor}
  Under any pair $\Pi$ of path ordering, any suffix $S$ in $\Suf$ satisfies conditions (1) and (2) below: 
  \begin{enumerate}[(1)]
  \item $S$ has $(-)$-canonical factoring if and only if
    $S$ has $(+)$-canonical factoring. 
    
  \item Let $S$ be any canonical suffix with the associated path $\pi = (f_1, \dots, f_\ell) \in \Path$ spelling $S$, and let $f_{k_-}, f_{k_+}$ be the indexes of the $(-)$- and $(+)$-certificate $k_-, k_+$  in $\pi$,  resp., then $1\le k_+ \le k_- \le \ell$ holds.
  \end{enumerate}
\end{lemmarep}

\begin{proof}
  (1) First, we show the direction that $(-)$-canonical factoring of $S$ implies the $(+)$-canonical factoring of $S$. 
  Let $\pi = (f_1,\dots,f_\ell)$ be the path spelling $S = \str(\pi)$.  Suppose that $S$ has a $(-)$-canonical factoring and a $(-)$-certificate $f_k$ with $k$ such that $U_- = (f_1,\dots,f_{k - 1}) \in (\EP[-])^*$, $f_k \in \ES[-]$, and $L_- = (f_{k + 1}, \dots, f_\ell) \in (\EP[+])^*$. Consider the prefix $U_-\cdot f_k = (f_1,\dots,f_{k - 1}, f_k)$ of $\pi$.
  Then, we can always take the possibly empty, maximal prefix $L_+ := (f_i,\dots, f_k)$ of $U_-\cdot f_k$ consisting of $\EP[+]$-edges only, where $1\le i\le k$.
  There are two cases below:
  (1.a) If $i > 1$, we have $f_{m} \in \ES[+]$ for $m := i-1$ by the maximality of $L_+$. Then, the possibly empty prefix $U_+ := (f_1,\dots, f_{i-2})$ always exists and is a prefix of $U_-$. Therefore, $U_+$ must consist of $\EP[-]$-edges only since so is $U_-$.
  Putting $k_- := k$ and $k_+ := m$, we showed the claim.
  (1.b) Otherwise, $i = 1$. In this case, $U_-\cdot f_k$ consists of $\EP[+]$-edges only, and then, so is $S$. Since this means that $S$ is a $(+)$-trivial, $S$ contains a virtual $\ES[+]$-edge on its top with index $0$. Thus, it has $(+)$-canonical factoring with $k_+ := 0$. Combining the above arguments, we have claim (1).
  By construction above, it immediately follows that $k_+ = i-1 \le i \le k = k_-$. This shows claim (2).
  By symmetry, we can similarly show the opposite direction with the exception that we put $k_- = \ell + 1$ if $S$ is $(-)$-trivial. Hence, the lemma is proved. \qed 
\end{proof}

By \cref{lem:canonical:factor}, $\CS[-] = \CS[+]$ holds.
Thus, we denote the set by $\CS[] := \CS[-] = \CS[+]$, and simply call its members \textit{canonical suffixes} of $G$.

\subsubsection{Forward and backward DFSs using search paths.}
\label{sec:tech:search:tree}
Recall that in our approach, we encode a target indexing array, say $C$, with a subset $\CS[]$ of canonical suffixes under some path ordering $\Pi = (\leqp[-], \leqp[+])$ and an appropriately chosen pair $\varphi = (\idx, \val)$ of mappings over $\CS[]$ as the image of $\CS[]$ by $\varphi$. Thus, the remaining task is to generate all index-value pairs $\varphi(S) = (\idx(S), \val(S))$ by enumerating all $S \in \CS[]$ in the appropriate index order $\leqlex$ or $\leqpos$.
To do this, we use the forward and backward DFSs using search paths of $\SP$ as follows.

Consider the directed graph $\T[-]$ (resp.~$\T[+]$) obtained from $\SP[-]$ (resp.~$\SP[+]$) by merging common prefixes (resp.~suffixes). Then, we can easily see that  (i) $\T[-]$ is connected at the root (resp.~so is $\T[+]$ at the sink), (ii) $\T[-]$ is spanning over $\ES[-]$ (resp.~so is $\T[-]$ over $\ES[+]$). However, the graph $\T$ may contain the same edge more than once. The next lemma states that it is not the case for $\SP$. In \cref{fig:ordered:cdawg:dfstree}, we show examples of the forward and backward search trees $\T[-]$ and $\T[+]$. 

\begin{lemma}\label{lem:sp:freeness}
  Let $(\EP[-], \EP[+])$ be any pair of partitions of $\E$. Then, 
  \begin{enumerate*}[(1)]
  \item the set $\SP[-]$ is prefix-free, and 
  \item the set $\SP[+]$ is suffix-free. 
  \end{enumerate*}
\end{lemma}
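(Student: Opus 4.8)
The plan is to deduce both parts from a single structural observation about the shape of a search path and then argue by contradiction. First I would read off from \cref{def:canonical:suffix:search:path} the exact form of a $(-)$-search path: for a certificate $f=(v_-,X_-,w_-)\in\ES[-]$, the part of the canonical factoring above $f$ is a root-to-$v_-$ path using only $\EP[-]$-edges, hence — by the Remark on representatives, which says a root-to-$v$ path equals $\repr[-](v)$ exactly when it uses only $\EP[-]$-edges — it equals $\reprup(\tail(f))$, so $P_f=\reprup(\tail(f))\cdot f$. The trivial certificate fits the same mould, $P_{\trivedge[-]}=\reprup(\sink)\cdot\trivedge[-]$, provided we keep the convention that $\trivedge[-]$ is a (non-primary) virtual edge attached at the bottom. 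The only fact I will use, call it $(\star)$: every $(-)$-search path is a block of $\EP[-]$-edges followed by exactly one more edge, namely its certificate, and that certificate does not lie in $\EP[-]$; in particular the certificate is the \emph{last} edge of the search path, so a search path determines its certificate.

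For part (1), suppose toward a contradiction that $P_f$ is a proper prefix of $P_g$ for certificates $f\ne g$. Since $P_f$ is a prefix of $P_g$ and strictly shorter, the last edge of $P_f$, namely $f$, occurs at a position of $P_g$ that is not the final position of $P_g$. But by $(\star)$ every edge of $P_g$ other than its last is an $\EP[-]$-edge, whereas $f\notin\EP[-]$ — a contradiction. Hence no element of $\SP[-]$ is a proper prefix of another; combined with injectivity of $f\mapsto P_f$ (immediate from $(\star)$, which exposes $f$ as the last edge of $P_f$), this is precisely prefix-freeness of $\SP[-]$.

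Part (2) is the mirror image. By the same reading of \cref{def:canonical:suffix:search:path}, a $(+)$-search path has the form $P_f=f\cdot\reprdw(\head(f))$ for a certificate $f\in\ES[+]$, with the analogous convention $P_{\trivedge[+]}=\trivedge[+]\cdot\reprdw(\root)$: the part below the certificate uses only $\EP[+]$-edges, so $(\star')$ every $(+)$-search path is its certificate — which is not in $\EP[+]$ — followed by a block of $\EP[+]$-edges; in particular the certificate is the \emph{first} edge. Running the argument of part (1) with ``prefix'' replaced by ``suffix'' and first/last interchanged yields suffix-freeness of $\SP[+]$.

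I do not anticipate a genuine obstacle; the content is entirely bookkeeping, and the three points that need care are: handling the virtual certificates $\trivedge[\pm]$ uniformly with the ordinary secondary edges, so that ``the certificate is the last (resp.\ first) edge'' has no exceptions; being explicit that ``prefix''/``suffix'' refers to the edge sequence of a path; and invoking the Remark on representatives to justify that the upper (resp.\ lower) part of a canonical factoring contains no secondary edge. It is also worth noting that the hypothesis requires only that $\EP[-]$ and $\EP[+]$ be valid primary-edge sets (a spanning in-tree rooted at $\root$, resp.\ a spanning out-tree into $\sink$), which is exactly what makes $\reprup$, $\reprdw$, and hence the search paths, well defined; no property of a particular pair $\Pi$ of path orderings is used.
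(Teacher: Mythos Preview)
Your argument is correct and is essentially the same as the paper's: the paper's one-line proof observes that $\SP[-]\subseteq(\EP[-])^*\cdot\ES[-]$ and concludes immediately that no element can be a proper prefix of another, with part~(2) ``by symmetry''. Your observation~$(\star)$ is exactly this inclusion spelled out, and your contradiction step is the same reasoning made explicit; the additional care you take with the trivial certificates and with injectivity of $f\mapsto P_f$ is sound but not something the paper bothers to write down.
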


\begin{proof}
  (1) Since $\SP[-]\subseteq (\EP[-])^*\cdot \ES[-]$, for any distinct search paths $X, Y$ in $\SP[-]$, $X$ cannot be a proper prefix of $Y$. By symmetry, we can show (2).
  \qed 
\end{proof}

\begin{proposition}\label{lem:sp:spanning:trees}
  $\T[-]$ is a forward spanning tree for $\ESX[-]$ rooted at $\root$, while $\T[+]$ is a backward spanning tree for $\ESX[+]$ rooted at $\sink$. Moreover, $\T[-]$ and $\T[+]$ have at most $e$ edges.%
\end{proposition}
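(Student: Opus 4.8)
The plan is to prove the statement for $\T[-]$; the claim for $\T[+]$ then follows by the left--right mirror argument (reverse every path, interchange $\root$ with $\sink$ and ``prefix'' with ``suffix'', replace the analysis of incoming $(-)$-primary edges by that of outgoing $(+)$-primary edges, and invoke \cref{lem:sp:freeness}(2) in place of (1)). First I would note that every node $v\neq\root$ has exactly one incoming $(-)$-primary edge: by definition $f\in\N[-](v)$ is $(-)$-primary iff the simple path spelling $\reprup(v)$ passes through $f$, and that path meets $v$ only in its final edge; while $\root$ has no incoming edge. Thus $(\V,\EP[-])$ has $|\V|-1=|\EP[-]|$ edges, and since $G$ is a DAG, iterating ``take the unique incoming $(-)$-primary edge'' from any node walks backwards without ever repeating a node, so it must stop at the unique source $\root$. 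Hence $(\V,\EP[-])$ is a spanning arborescence rooted at $\root$, with a unique $\root$-to-$v$ path $\pi_v$ for each $v\in\V$.

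\textbf{Step~2 ($\T[-]$ is a tree whose leaves correspond to $\ESX[-]$).}
By \cref{def:canonical:suffix:search:path}, every element of $\SP[-]$ is either the trivial search path $\trivsuf[-]$ --- i.e.\ the $\root$-to-$\sink$ path of $(\V,\EP[-])$, whose (imaginary) terminal edge is $\trivedge[-]$ --- or has the form $\sigma\cdot f$ with $f=(v,X,w)\in\ES[-]$ and $\sigma$ an all-$\EP[-]$ path from $\root$ to $v$; by Step~1 such a $\sigma$ is forced to equal $\pi_v$, so a search path is determined by its last edge. Conversely, \cref{lem:canonical:unique} says $\cano$ is a bijection onto $\ESX[-]$, so each edge of $\ESX[-]$ is the last edge of exactly one search path; thus ``search path $\mapsto$ last edge'' is a bijection $\SP[-]\to\ESX[-]$. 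Now $\T[-]$, formed by merging common prefixes of $\SP[-]$, is a trie (hence acyclic) and is connected at $\root$ since every search path starts there, so $\T[-]$ is a tree rooted at $\root$. By \cref{lem:sp:freeness}(1) the set $\SP[-]$ is prefix-free, so every search path ends at its own leaf and none of them is an interior node; hence the leaves of $\T[-]$ are in bijection with $\ESX[-]$, i.e.\ $\T[-]$ is a forward spanning tree for $\ESX[-]$ rooted at $\root$. (It moreover spans every node of $G$: a node off the $\root$-to-$\sink$ path of $(\V,\EP[-])$ must have, inside that arborescence, a descendant with an outgoing $\ES[-]$-edge --- otherwise its subtree would have $\sink$ as its unique leaf and hence lie on that path --- and so lies on some search path.)

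\textbf{Step~3 (edge count, and $\T[+]$).}
Next I would verify that no $G$-edge occurs twice in $\T[-]$: if a $G$-edge $g$ were the child edge of two distinct trie nodes $p_1,p_2$, then $g$ leaves the last node of $p_1$ and the last node of $p_2$ in $G$, forcing those to coincide, say at $u$; but $p_1,p_2$, having children, are proper prefixes of search paths (full search paths are leaves, by prefix-freeness) and hence all-$\EP[-]$ paths from $\root$ to $u$, so $p_1=p_2$ by the uniqueness in Step~1 --- a contradiction. Therefore $\T[-]$ has at most $|\E|=e_R\le e$ edges. The whole argument dualises verbatim to $\T[+]$: each non-$\sink$ node has a unique outgoing $(+)$-primary edge (the first edge of the $\reprdw$-path), so $(\V,\EP[+])$ is a spanning anti-arborescence rooted at $\sink$, and merging suffixes of the suffix-free (\cref{lem:sp:freeness}(2)) set $\SP[+]$ gives a backward spanning tree for $\ESX[+]$ rooted at $\sink$ with at most $e_R\le e$ edges.

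\textbf{Main obstacle.}
The one step that needs care is the structural identification of $\SP[-]$ in Step~2 --- that the prefix of a canonical suffix's path up to its $(-)$-certificate is exactly an all-$\EP[-]$ path, and conversely that every $\ES[-]$-edge really occurs as some canonical suffix's $(-)$-certificate. Both halves are delivered by \cref{def:canonical:suffix:search:path} together with \cref{lem:canonical:unique}; once these are granted, Steps~1 and~3 reduce to routine facts about arborescences and tries.
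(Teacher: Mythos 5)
Your proof is correct and follows essentially the same route as the paper's: establish that $\T[\delta]$ is a tree connected at $\root$ (resp.\ $\sink$) whose leaves are exactly the edges of $\ESX[\delta]$, and then bound its size by showing that each edge of $G$ occurs at most once as a node of the trie. Your Steps~1 and~3 in fact supply detail the paper leaves implicit: the paper derives the at-most-once property from \cref{lem:sp:freeness} alone, whereas the ingredient really doing the work is the uniqueness of the all-$\EP[-]$ (resp.\ all-$\EP[+]$) path into (resp.\ out of) each node, which you prove explicitly via the arborescence structure of the primary edges.
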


\begin{proof}
    Since properties (i) and (ii) were proved, we show that (iii) any distinct paths of $\SP$ do not share the same edge in common. 
    Precisely speaking, $\T$ is a rooted tree whose nodes are edges of $G$. Note that if we embed it into $G$, it may form a DAG in general. However, we can show the claim (iii) from \cref{lem:sp:freeness}. From claim (iii), we also see that $\T$ contains at most $|\E| = e$ edges.
\qed 
\end{proof}

Combining the above arguments, we have the main theorem of this section. 

\begin{proposition}\label{thm:fwd:bwd:dfs}
Assume the standard path ordering $\Pi^\pos_\lex = (\leqpos, \leqlex)$.
We can perform the following tasks in $O(e)$ worst-case time and words of space on the ordered CDAWG $G$ under $\Pi^\pos_\lex$ for $T$: 
(1) Enumerating all $(-)$-certificates in $\ESX[-]$ in the lexicographic order $\leqlex$ by the ordered ordered DFS of $\T[-]$. 
(2) Enumerating all $(+)$-certificates in $\ESX[+]$ in the text order $\leqpos$ by the backward ordered DFS of $\T[+]$.
\end{proposition}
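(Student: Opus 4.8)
The plan is to realize task~(1) as the \emph{forward} and task~(2) as the \emph{backward} ordered depth-first search (DFS) of the abstract trees $\T[-]$ and $\T[+]$, whose nodes are edges of $G$: by \cref{lem:sp:spanning:trees} each such tree has at most $e$ edges, its internal structure is carried by the primary-edge set $\EP[-]$ (resp.\ $\EP[+]$), and its leaves are precisely the certificates $\ESX[-]$ (resp.\ $\ESX[+]$), which by \cref{lem:canonical:unique} are in bijection with the canonical suffixes $\CS[]$ via $\cano$; so emitting the leaves in a suitable left-to-right order emits the certificates in the required order. First I would apply \cref{lem:preprocessing} and the sorting remark following it to spend $O(e)$ time and words so that, for every node $v$, the values $|\repr(v)|$, the predicates $\isprimary_{\delta}(f)$, and the lists of edges incident to $v$ sorted by $\leqelex[+]$ (outgoing) and $\leqepos[-]$ (incoming) become available in $O(1)$ time; these two sorted lists are exactly the edge orderings of the ordered CDAWG under $\Pi^\pos_\lex$, so this step is free when $G$ is already given ordered.

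For task~(1) the search is the recursion from $\root$ that, at a node $v$, scans the outgoing edges $f=(v,X,w)$ in $\leqelex[+]$ order, recursing on $w$ whenever $\isprimaryup(f)=1$ and emitting $f$ as a $(-)$-certificate whenever $\isprimaryup(f)=0$; on reaching $\sink$ it emits $\trivedge[-]$. Since every non-root node has a unique $(-)$-primary incoming edge, $\EP[-]$ spans $G$ as a tree rooted at $\root$, so the recursion visits each node of $\T[-]$ once, for a total of $\sum_{v}|\N[+](v)|=O(e)$ time and $O(e)$ words. To see that the emission order is $\leqlex$ on $\CS[]$, let $f_1,f_2$ be two emitted certificates with $S_i=\cano(f_i)$ and $(-)$-search paths $P_1,P_2$: the two paths agree on a common root path up to their lowest common ancestor in $\T[-]$, past which they take distinct outgoing edges, whose first symbols differ by the determinism of the CDAWG, so $\str(P_1)$ and $\str(P_2)$ already disagree and the search emits the one with the lexicographically smaller label first. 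By \cref{lem:sp:freeness}~(1) neither of $\str(P_1),\str(P_2)$ is a prefix of the other, and as $S_i$ extends $\str(P_i)$, the emission order coincides with $\leqlex$ restricted to $\CS[]$.

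For task~(2) the search is the symmetric recursion from $\sink$ that, at a node $v$, scans the incoming edges $g=(u,Y,v)$ in $\leqepos[-]$ order, recursing on $u$ whenever $\isprimarydw(g)=1$ and emitting $g$ as a $(+)$-certificate whenever $\isprimarydw(g)=0$; on reaching $\root$ it emits $\trivedge[+]$. The $O(e)$-time, $O(e)$-space bound follows as before from $\sum_v|\N[-](v)|=O(e)$. For the order, let $f_1,f_2$ be two emitted certificates with $S_i=\cano(f_i)$ and $(+)$-search paths $P_1,P_2$: the two paths agree on a common lower path down to their lowest common ancestor $v$ in $\T[+]$, which necessarily spells $\reprdw(v)$ and has some length $d$, and they enter $v$ through distinct incoming edges $g_1,g_2\in\N[-](v)$. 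For each $i$, let $Q_i$ be the string spelled by the longest root path reaching the tail of $f_i$, then the edge $f_i$, then the primary edges from $\head(f_i)$ down to $v$; then $Q_i\in\U[-](v)$, it passes through $g_i$, and one checks $|S_i|=|Q_i|+d$. By the compatibility of $\leqepos[-]$ with the text order $\leqpos$ (Theorem~1 of~\cite{belazzougui2015cpm:composite}), $g_1$ preceding $g_2$ under $\leqepos[-]$ forces every string entering $v$ through $g_1$ to be strictly longer than every string entering through $g_2$; hence $|Q_1|>|Q_2|$, so $|S_1|>|S_2|$. Since no two distinct suffixes have the same length, the $\leqepos[-]$-earlier incoming edge of $v$ always leads to the longer, hence text-order-earlier, canonical suffix, and the same comparison shows that the leaf certificates emitted directly at $v$ interleave correctly with those emitted through recursion; so the emission order coincides with $\leqpos$ restricted to $\CS[]$.

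The step I expect to be the main obstacle is the order-correctness of task~(2): the comparison made at a branching node of $\T[+]$ is the purely local ordering $\leqepos[-]$, which inspects only the entering edge's tail-representative length and its label, yet it must govern the text order of the full canonical suffixes, which additionally carry the common lower path below $v$ and the primary stretch between the certificate and $v$. Verifying that the two strings being compared indeed lie in $\U[-](v)$ and pass through $g_1$ and $g_2$, so that the compatibility property of~\cite{belazzougui2015cpm:composite} applies verbatim, is the delicate point; task~(1) is comparatively routine, since each branching there is resolved by a single differing symbol together with prefix-freeness.
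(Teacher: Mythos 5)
Your proposal is correct and follows essentially the same route as the paper: a forward ordered DFS of $\T[-]$ from $\root$ (following $\EP[-]$-edges, emitting $\ES[-]$-edges, visiting outgoing edges in $\leqelex[+]$ order) and the symmetric backward DFS of $\T[+]$ from $\sink$ (visiting incoming edges in $\leqepos[-]$ order), with the $O(e)$ bound from \cref{lem:sp:spanning:trees} and \cref{lem:preprocessing}. In fact you supply more detail than the paper on the order-correctness of task~(2) --- the paper disposes of it with ``by symmetry'' and the compatibility remark --- and your use of the disjoint-length-interval property of $\leqepos[-]$ at the branching node of $\T[+]$ is exactly the right justification.
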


\begin{proof}
By \cref{lem:sp:spanning:trees}, we can enumerate all $(-)$-certificates in the lexicographic order $\leqlex$ by the standard ordered DFS of $\T[-]$ starting from the root of $G$ and going downwards it iteratively following $\EP[-]$ edges; When it encountered an $\ES[-]$ edge, it report it and backtracks. At any node $v$, its outgoing edges are visited in the order of $\leqe[+] = \leqelex$.
We remark that the DFS traverses the same edge at most once because in the case that more than one edge of $\T$ meet at the same node on $G$, exactly one of them is $(-)$-primary. Therefore, the DFS follows only exactly one $\EP$-edge, and backtracks with all the remaining $\ES$-edges.
By symmetry, we can enumerate all $(+)$-certificates in the text order $\leqpos$ by the backward ordered DFS of $\T[+]$ starting from the sink of $G$, going upwards following $\EP[+]$ edges by selecting incoming edges in the order of $(\leqe[-]) = (\leqepos)$. In either case, the DFS over $\T$ traverses at most $O(|\T|) = O(e)$ edges.
\qed 
\end{proof}
  
Concerning to \cref{thm:fwd:bwd:dfs}, from (2) of \cref{lem:canonical:factor},
we can put the pointer from each discovered $(-)$-certificate $f_-$ to its $(+)$-counterpart $f_+$ such that $\cano[-](f_-) = \cano[+](f_+)$ in amortized $O(1)$ time per certificate and \textit{vice versa}.



\section{Computing Run-Length BWT}
\label{sec:algbwt}

\subsubsection{Characterizations}
\label{sec:algbwt:characterization}
Given the BWT for a text $T$, the set of all \textit{irreducible ranks} is given by the set 
\begin{inalign}
  I_{BWT} := \sete{ i \in [n] \mid BWT[i] \not= BWT[i-1] }
  \subseteq [n].  
\end{inalign}
Then, we define the set $QI_{BWT}$ of all \textit{quasi-irreducible ranks} by the set
\begin{inalign}
  QI_{BWT} &:= \sete{ \Rnk(S) \mid S \in \CS }
  \subseteq [n],   
\end{inalign}
Obviously, $|QI_{BWT}| \le e$ since $\Rnk$ is a bijection and $|\CS[]| \le e$.
By assumption, we can show that $I_{BWT} \subseteq QI_{BWT}$. Consequently, $QI_{BWT}$ satisfies the following \textit{interpolation property}. 

\begin{toappendix}
\begin{lemmarep}\label{lem:bwt:ibwt:qibwt:inclusion}
Under $(\lequp, \leqdw) = (\leqpos, \leqlex)$, 
$I_{BWT} \subseteq QI_{BWT}$. 
\end{lemmarep}

\begin{proof}
  First, we see that $QI_{BWT} = \dom({\RSA(CS_-)})$ by definition and $\dom({\RSA(CS_-)}) = \dom({\RSA(CS_+)})$ by~\cref{lem:cano:euivalence:up:down}. Thus, we have $QI_{BWT} = \dom({\RSA(CS_+)})$, and
  we will show that $I_{BWT}\subseteq \dom({\RSA(CS_+)})$, or equivalently, for any $i \in I_{BWT}$, there exists some $(+)$-canonical suffix $S$ such that $\Rnk(S) = i$. 
  We remark that from \cref{lem:qbwt:firstrank}, we will show that if $i \in I_{BWT}$ then $i \in QI_{BWT}$ holds for any rank $i \in 2..n$.
Let us start with any element $i \in I_{BWT}$ such that $i > 1$ and $BWT[i] \not= BWT[i-1]$. By assumption, there exists a pair of lexicographically adjacent suffixes, $T_q = T_{SA[i-1]}$ and $T_p = T_{SA[i]}$ such that $T_q \lesslex T_p$. Since this pair satisfies the pre-condition of \cref{lem:tech:adjacent:suffixes:leqdw} under $\lessdw = \lesslex$, we can put  $T_r = P X_r D_r$ for $\forall r = p,q$ such that 
\begin{enumerate*}[(i)]
\item $T_q$ and $T_p$ are root-to-sink paths with the unique branching node $v$;  
\item $P = T_p[1..\ell] = lcp(T_{SA[i]}, T_{SA[i-1]})$ is the common suffix of $T_q$ and $T_p$ starting from $\root(G)$ with its locus $v$; 
\item $X_r$ is the label of branching edge $f_r = (v, X_r, w_r) \in \ESD$ at node $v$ such that $f_q \lessein f_p$; Moreover, we see $f_p \in \ESU$ (c1); 
\item $D_p = \min_{\leqlex} \U[+](w_p) = \reprdw(w_p)$ (c2). 
\end{enumerate*}
Now, we will show that $P = \reprup(v)$ holds. Since $P$ is a common prefix of $\set{T_q, T_p} = \set{T_{SA[i-1]}, T_{SA[i]}}$,
the ranks of the suffixes $\set{T_{SA[i-1]}, T_{SA[i]}}$ belong to the same SA-interval for $P$, i.e., $\set{i, i-1}\subseteq I(P)$. 
    Then, we observe that $P$ is left-maximal; Otherwise, the BWT-interval for $I(P)$ consists of the same symbol, say $c \in \Sigma$, however, this contradicts the assumption that $BWT[i] \not= BWT[i-1]$.
    As seen in \cref{sec:tech:edge:class}, $P$ is left-maximal if and only if $P = \reprup(v)$. Thus we have $P = \reprup(v)$~(c3).
    Combining (c1)--(c3), we see that
    $T_p$ is $(+)$-canonical by certificate edge $f_p = (v, X_p, w_P) \in \ESD$, belonging to $CS_+$. 
    By the above arguments, the claim follows. 
\qed
\end{proof}
\end{toappendix}

\begin{lemmarep}[interpolation property]
  \label{lem:bwt:prop:interpolation}
Under $(\lequp, \leqdw) = (\leqpos, \leqlex)$, 
if $i_* \not\in QI_{BWT}$, $BWT[i] = BWT[i-1] \in \Sigma$ holds for $\forall i \in [n]$. 
\end{lemmarep}

\begin{proof}
By assumption, we can show that $I_{BWT} \subseteq QI_{BWT}$. 
Therefore, if any rank $i$ is quasi-reducible w.r.t.~$QI_{BWT}$, then
$BWT[i] = BWT[i-1]$.
    \qed
\end{proof}


\setlength{\textfloatsep}{.5\baselineskip}
\begin{algorithm}[t]
\renewcommand{\-}{\mbox{\textrm{-}}}
\caption{The algorithm for computing the quasi-irreducible BWT for $T[1..n]$ from the CDAWG $G$ for $T$ stored in read-only memory. 
}
\label{algo:recbwt}
\kw{Procedure} \algo{RecRBWT}$(v)$\;
\Begin{
    \iIf{$\Out(v) = \emptyset$}{
      \Return $(\mbox{`$\daller$'}, 1)$
      \Comment*{Case: trivial suffix. $T[n] = \mbox{`$\daller$'}$}
    }
    \Else (\Commentblock{Case: non-trivial suffix}) {
      \For{\textbf{each} $f = (v, X, w) \in \Out(v)$ in order $\leqelex$ compatible to $\leqlex$}{
        \uIf (\Commentblock{Case: $(-)$-primary}) {$\isprimary_{-}(f)$} {
          $RBWT' \gets \algo{RecRBWT}(w)$\;
        }
        \Else (\Commentblock{Case: $(-)$-secondary}) {
          $c \gets \op{precsym}(f)$; 
          $\ell\gets \op{nleaves}(\dst(f))$;
          $RBWT' \gets (c, \ell)$\; 
        }
        $RBWT \gets RBWT \circ RBWT'$\;
      }
      \Return $RBWT$\; 
    }
}
\end{algorithm}

\subsubsection{Algorithm.}
In \aref{algo:recbwt}, we present the recursive procedure that computes the quasi-irreducible BWT for text $T$ from an input CDAWG $G$ for $T$ stored in read-only memory or the self-index $G = \CDAWGd$ when it is invoked with $v = \root(G)$ and $RBWT = \eps$.
Let $F = (P_1, \dots, P_h)$, $h \le e$, be the sequence of all $(-)$-search path of $\SP[-]$ sorted in the lexicographic order $\leqlex$ of string labels with the index $i_*$ of the trivial $(-)$-path. 
Let $\sig I = (I_1, \dots, I_h)$ be the associated sequence of SA-intervals such that $I_i = [sp(P_i)..ep(P_i)] \subseteq [n]$ for all $i \in [h]$.
From \cref{lem:sp:freeness}, we can show that $\sig I$ forms an ordered partition of $[n]$, namely, the elements of $\sig I$ are ordered in $\leqlex$ and any suffix falls in exactly one interval of $\sig I$. 

\begin{toappendix}
\begin{lemmarep}\label{lem:bwt:ordered:part}
  $\sig I$ forms an ordered partition of $[n]$. 
\end{lemmarep}

\begin{proof}
  We show that
  \begin{enumerate*}[(i)]
  \item $\bigcup_{i} I_i = [n]$, and 
  \item $I_i \cap I_j = \emptyset$ for any $i < j$,
    and moreover, 
  \item $sp(P_1) = 1$, $ep(P_h) = n$, and $sp(P_i) = ep(P_{i-1}) + 1$ for all $i \in [2..n]$.
\end{enumerate*}
  Let $S \in \Suf$ be any suffix with $\Rnk(S) = k \in [n]$. Then, we can show that there exists some $X \in \SP[-]$ such that $X$ is a prefix of $S$. If $X$ is $i$-th smallest member of $\SP[-]$ under $\leqlex$, we have $\Rnk(S) \in I(X) = I_i$. This implies claim (i). The claim (ii) immediately follows from \cref{lem:sp:freeness}. Since $\sig I$ is an ordered partition of $[n]$ from (i) and (ii), claim (iii) follows.
\qed   
\end{proof}
\end{toappendix}

Now, we give a characterization of the BWT in terms of the $(-)$-search paths for the canonical suffixes. 

\begin{lemmarep}
  \label{lem:bwt:characterize}
  Let $T[1..n] \in \Sigma^n$ and $BWT[1..n] \in \Sigma^n$ be the BWT for $T$. 
  \begin{enumerate}
  \item $BWT[1..n]  = BWT[I_1]\circ \dots \circ BWT[I_h]$. 
  \item For each $i \in [h]$,
    the conditions (i) and (ii) below hold: 
  \begin{enumerate*}
  \item If $i = i_*$, $P_i$ is a trivial $(-)$-search path, and then $I_{i_*}$ is a singleton and $BWT[I_{i_*}] = T[n] = \mathdaller$.
  \item
    If $i\not= i_*$, 
    $P_i$ is a non-trivial $(-)$-search path with certificate $f \in \ES[-]$. Then, $BWT[I_1] = c^\ell$, 
    where
    $c := T[p-1]$, 
    $\ell := |I_i|$, and 
    $p = \Pos(\cano(f))$.  
  \end{enumerate*}
  \end{enumerate}
\end{lemmarep}

\begin{proofsketch}
  Claim (1) immediately follows from the definition of $\sig I$. Claim (2) is obvious since trivial $(-)$-search path is the longest suffix $T[1..] = T$ itself.
  (3) Suppose that $X$ is a non-trivial $(-)$-search path with locus $v$. Then, $X$ is not equal to $\repr[-](v)$. 
  As seen in \cref{sec:tech:edge:class}, it follows that $X$ is not left-maximal in $T$. Therefore, there exists some $c \in \Sigma$ that precedes all start positions of $X$ in $T$. $I_i$ gives the number of leaves below $v$. Therefore, Claim (3) is proved. \qed 
\end{proofsketch}

\subsubsection{Case with a read-only text.} 
Suppose that the read-only text $T[1..n]$ is available.
By \cref{lem:bwt:characterize}, we see that all BWT-intervals $BWT[I_i]$ but $BWT[I_{i_*}]$ are equal-symbol runs with length $|I_i|$, while $BWT[I_{i_*}]$ is the singleton $\mathdaller$.
Since the position $p$ of the lex-first suffix $\cano(f)$ in $I_i$ can be obtained in constant time by $p = \Pos(\cano(f))$, the preceding symbol, denoted $\precsym(f) := T[p-1]$, can be obtained in constant time in the case of a read-only text.
Concatenation of two run-length encodings can be done in $O(1)$ time by maintaining the symbols at their both ends.  
Hence, we can construct the RLBWT of size $r \le e$ in $O(e)$ worst-case time and $O(e)$ words of space using \aref{algo:recbwt}. 

\subsubsection{Extension to the case without a text.} 
Next, we consider the case that input is the self-index version of $\CDAWGo$ without access to the text $T$. 

\begin{figure}[t]
  \centering  
  \begin{subfigure}[t]{0.49\textwidth}
    \centering
    \nofbox{\includegraphics[width=0.95\textwidth]{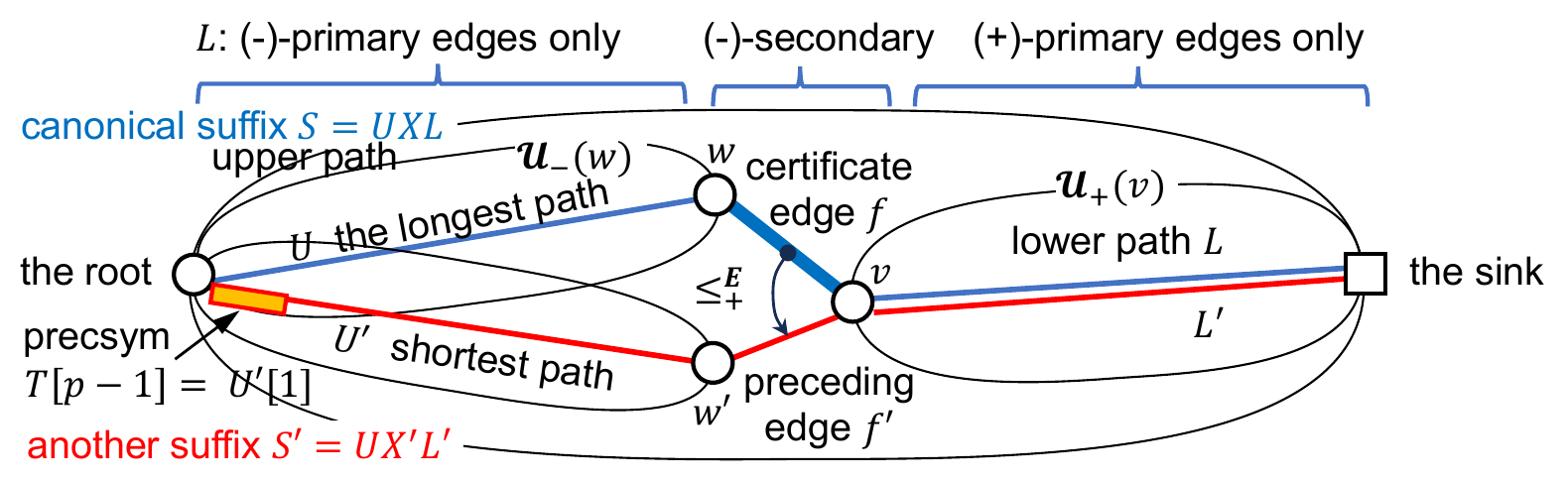}}
    \caption{BWT: Computing a preceding symbol}\label{sub:one}
  \end{subfigure}
  \begin{subfigure}[t]{0.49\textwidth}
    \centering
    \nofbox{\includegraphics[width=0.95\textwidth]{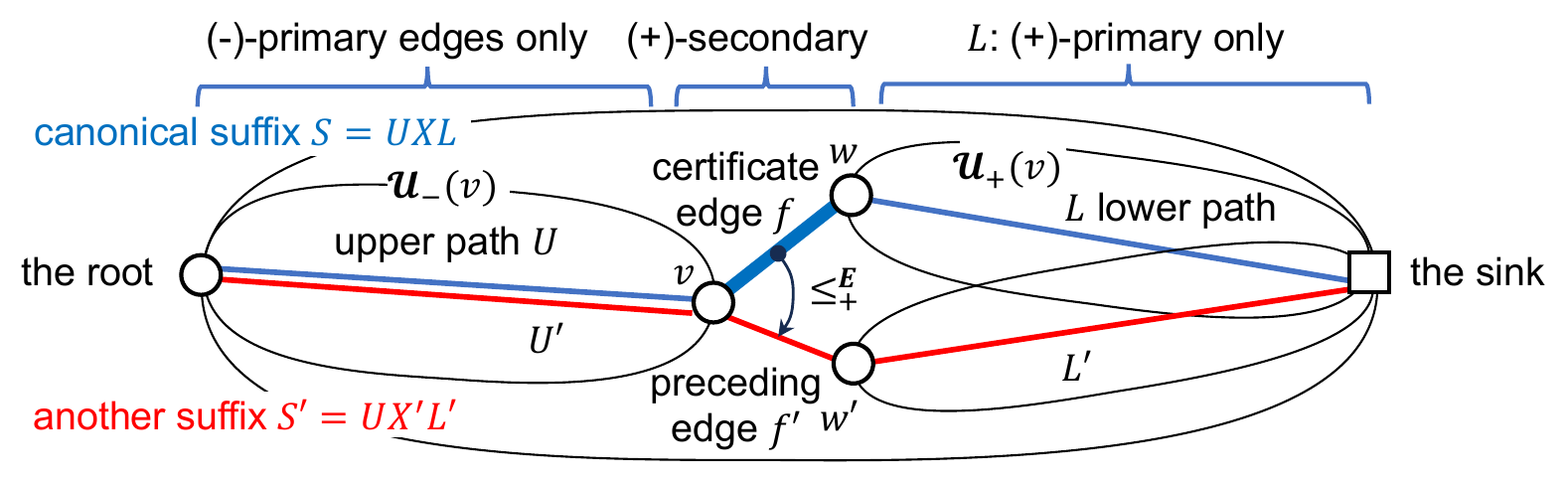}}
    \caption{GLPF: Computing a previous factor}\label{sub:two}
  \end{subfigure}
  \caption{Computation of run-length BWT and quasi-irreducible GLPF}
\end{figure}

\begin{lemmarep}[computing a preceding symbol]
  \label{lem:bwt:weiner:symbol}
  Given $G = \CDAWGd$, the set of $O(e)$ preceding symbols $\op{precsym}(f) := T[p - 1]$ of the non-trivial canonical suffix $S = \cano(f)$ for all certificate edges $f \in \ESU$ can be computed 
  in $O(e)$ worst-case time and words of space. 
\end{lemmarep}

\begin{proofsketch}
  Let $S$ has position $p$ and . 
  Since $S$ is non-trivial, there exists the predecessor $S'$ of $S$ such that $|S'| = |S| + 1$. Furthermore, there exists another incoming edge $f'$ such that $f' \leqepos f$, $f = (v, X, w)$, and $f' = (v', X', w)$ hold. Then, $S'$ and $S$ meet at node $v$ going through $f'$ and $f$, and thus, they can be factorized as $S' = U' X' L$ and $S = U X L$ (See \cref{sub:one}). Since $U'$ is the shortest in $\U[-](w')$, it follows from \cref{lem:preprocessing} that $\op{precsym}(f) = U'[1]\in \Sigma$ can be computed by $\op{fstsym\-shortest}(w')$ using $O(e)$ preprocessing and space. \qed   
\end{proofsketch}

\begin{proof}
  In \cref{fig:weiner:symbol}, we show the idea of the proof. 
  Let $f = (v, X, w)$ be any $\ESU$-edge. Then,
  there is a pair $(S', S)$ of suffixes in $\Suf$ such that $|S'| = |S| + 1$, where $S'$ is the predecessor of $S$ in $\Suf$. Suppose that $f'$ and $f$ are the pair of incoming edges of $w$ through which $S'$ and $S$ come from the root to $w$, where $f' = (v', X', w)$. That is, we can put $S' = U'X'D$ and $S = UXD$, where $U' = \shortestup(v')$, $U = \reprup(v)$, and $D = \reprdw(w)$.
  If $f$ is $(-)$-secondary, such $f'$ always exists.
  If the start position of $S$ is $p$ in $T$, the start position of $S'$ is $p-1$. Therefore, we can obtain $\op{precsym}(f) = T[p-1]$ as $T[p-1] = S'[1] = U'[1]$. This can be easily computed by the reverse depth-first search from the sink. The details are omitted. 
\qed
\end{proof}

\begin{theoremrep}\label{thm:algbwt:complexity}
  Let $T[1..n]$ be any text over an integer alphabet $\Sigma$. Given a self-index version of $\CDAWGo$ without a text, \aref{algo:recbwt} constructs the RLBWT of size $r \le e$ in $O(e)$ worst-case time and $O(e)$ words of space. 
\end{theoremrep}

\section{Computing Irreducible $\GLPF$ Arrays}
\label{sec:alglpf}

\subsubsection{Characterizations.}
\label{sec:alglpf:characterization}
In order to treat the PLCP and LPF arrays uniformly,
we introduce their generalization, called the quasi-irreducible $\GLPF$ array for a text $T$ parameterized by $\leqp[+] \in \set{\leqlex, \leqpos}$ according to~\cite{navarro:ochoa:2020approx,crochemore2008computing:lpf}.

\begin{definition}
  \label{def:glpf:array}
  The \textit{generalized longest previous factor (GLPF) array} for a text $T[1..n]$ under $\leqdw$ is the array $\GLPFd[1..n] \in \nat^n$ such that for any $p \in [n]$,
\begin{inalign}
  \GLPFd[p] &:= \max (\;\set{\:lcp(T_{p}, T_{q}) \mid T_q \myprec T_p, \:q \in [n]\:}\cup\set{0}\:).
  \label{eqn:glpf:array}
\end{inalign}
\end{definition}

\begin{lemmarep}\label{lem:glpf:plcp:lpf}
For any text $T$, we have $PLCP$ $=$ $\GLPF_{\leqlex}$ and $LPF = \GLPF_{\leqpos}$. 
\end{lemmarep}

\begin{proof}
  (1) The proof is similar to~\cite{navarro:ochoa:2020approx,crochemore2008computing:lpf}.
  First, consider the case for $PLCP$. Let $PLCP[p] = \ell$.
  By definition of $PLCP$,
  $\ell := lcp(T_{i}, T_{i-1})$ such that $p = SA[i]$ and $i = ISA[p]$.
  Then, we can show that for any $p\in [n]$ and $i = ISA[p]$,
  $q = SA[i-1]$ maximizes $lcp(T_p, T_q)$ over all $q \in [n]$ such that $T_q \leqlex T_p$. Hence, $PLCP[p] = \GLPF_{\leqlex}[p]$ for any $p \in [n]$.
  On the other hand, the case for $LPF$ is obvious since the definitions of  $LPF$ and $\GLPF_{\leqpos}$ are identical when we put $\leqdw = \leqpos$.
Hence, the lemma is proved. 
\qed 
\end{proof}

Now, we introduce the quasi-irreducible GLPF array as a subrelation $\qir{\GLPF}$ of $\GLPF$ indexed in $\leqpos$. 
Under $\Pi^\pos_\pos = (\leqpos, \leqp[+])$, we define the quasi-irreducible GLPF array by the binary relation 
\begin{inalign}
  \qir{\GLPF} &:= \sete{ (\Pos(S), \val(S)) \mid S \in \CS[] } 
  \subseteq [n]\by\nat.  
\end{inalign}
where $\val(S) = \GLPFd[\Pos(S)]$ and 
$QI_{\GLPF} := \sete{ \Pos(S) \mid S \in \CS[] }$
is the set of \textit{quasi-irreducible} ranks.
Since $|\CS[]| \le e$, $\qir{\GLPF}$ has size $e$.
We observe that $GLPF[p] = 0$ implies $p \in QI_{GLPF}$. 
Then, $\GLPFd$ satisfies the interpolation property below. 

\begin{toappendix}
\begin{remarkrep}\label{lem:glpf:firstrank}
  For any $T$, $1 \in QI_{\GLPFd}$ holds. 
\end{remarkrep}

\begin{proof}
  By symmetry, we can show the claim similarly to one for \cref{lem:qbwt:firstrank}. 
  Suppose that $S_1 = \min_{\leqpos} \Suf$ is the longest suffix in $\Suf$. Consequently, we have $\Pos(S_1) = 1$.
  There are two cases below.
  If $S_1$ consists only of $\EP[+]$ edges, $S_1$ is the unique trivial canonical suffix $\reprup(\sink(G))$, and we are done.
  Otherwise, $S_1$ contains some $\ES[+]$ edges.
  Then, we can put
  $S_1 = U_* X D_*$ with
  $U_* \in (\E(G))^*$ and 
  $D_* \in (\EP[+])^*$
  for the bottom $\ES[+]$ edge $f_+ = (w, X, v)$. 
  Since $S_1$ is the longest, all edges in $U_*$ must be $\EPU$-edges.
  Thus, we have $U_* = \reprup(v)$.
  On the other hand, we have $D_* = \reprdw(w)$ since $D_* \in (\EP[+])^*$.
  Combining above arguments, we see that $S_1 = \cano(f_+)$ belongs to $\CS[+]$. Hence, $\Pos(S_1) = 1 \in QI_{\GLPFd}$. This completes the proof.
\qed 
\end{proof}
\end{toappendix}

\begin{propositionrep}[interpolation property]
  \label{lem:glpf:prop:interpolation}
For any position $p \in [n]$, if $p \not\in QI_{\GLPFd}$ then $\GLPFd[p] = \GLPFd[p-1] - 1$ holds.
Consequently, $PLCP$ and $LPF$ satisfy the same interpolation property w.r.t.~$QI_{GLPF}$. 
\end{propositionrep}

  
  
  

\begin{proof}
  First, we show the next claim.

\par \textit{Claim~1}: $\GLPFd[p] \ge \GLPFd[p-1] - 1$ for any position $p \in [2..n]$. 

\par (Proof for Claim~1)
We show that $\GLPFd[p+1] \ge \GLPFd[p] - 1$ for any $p \ge 1$.
Let $\ell := \GLPFd[p]$. 
If $\ell = 0$, we are done because $\GLPFd[p+1] \ge 0$ always holds.
Otherwise, $\ell \ge  1$ and $T[p] = T[q]$ (*) holds.
  From the extensibility of $\leqdw$,
  $T_q \leqdw T_p$ implies that $T_{q+1} \leqdw T_{p+1}$.
  Since $\GLPFd[p+1]$ is the maximum of $lcp(T_{p+1}, T_{q'})$ for all $q'\in [n]$ such that $T_{q'} \leqdw T_{p+1}$, it
  follows from (*) that $\GLPFd[p+1] \ge lcp(T_{p+1}, T_{q+1}) \ge lcp(T_p, T_q) - 1 = \GLPFd[p] - 1$. 
(End of the proof for Claim~1)
  
  The remaining task is to show the next claim. 
  \par \textit{Claim~2}: $\GLPFd[p-1] \ge \GLPFd[p] + 1$ for all $p > 1$.

  \par (Proof for Claim~2)
  By definition, $\GLPFd[p] = lcp(T_p, T_q) = \ell$ and $T_q \leqdw T_p$ for some $q \in [n]$. Thus, $T_p, T_q$ have the common prefix $P := T[p..p+\ell-1] = T[q..q+\ell-1]$ of length at least $\ell$. 
  By \cref{lem:glpf:prop:leftmaximality}, if $p \not\in QI_{\GLPFd}$ then $P$ is not left-maximal.
If $P$ is not left-maximal, then there exists some symbol $c \in \Sigma$ such that $BWT[I(P)] = c^{|I(P)|}$. Since both $T_p$ and $T_q$ are prefixed by $P$, they fall into $I(P)$. Therefore, we now have $T[p-1] = T[q-1] = c$. This implies that
$\GLPFd[p-1] = lcp(T_{p-1}, T_{q-1}) = 1 + lcp(T_{p}, T_{q}) = 1 + \ell = LCP[p] + 1$.
(End of the proof for Claim~2)
From Claim~1 and Claim~2, we showed the lemma.
\qed 
\end{proof}

\begin{lemmarep}[characterization of $\GLPFd$ value]
  \label{lem:glpf:prop:leftmaximality}
  For any pair $(p, \ell) \in [n]\by\nat$, the conditions (a)--(c) below are equivalent (See \cref{sub:two}): 
\begin{enumerate}[(a)]  
\item $(p, \ell) \in \qir{\GLPF}$.  
\item $\GLPFd[p] = \ell$ and $T_p[1..\ell] = T[p..p+\ell-1]$ is left-maximal in~$T$.  
\item For some $S \in \CS[]$, $p = \Pos(S)$.
  Also, 
  if $S$ is $(+)$-trivial $\ell = 0$, and 
  if $S$ is $(+)$-nontrivial, it has the form
  $S = \cano[+](f) = \reprup(w)\cdot X\cdot \reprdw(v)$
  for some $(+)$-certificate $f = (w, X, v) \in \ESD$ and
  $\ell = |\reprup(w)|$ holds.  
\end{enumerate}
\end{lemmarep}

\begin{proof}
  Let $p_* \in [n]$ be the special position of the trivial canonical suffix $\vobj[+]S = \reprdw(\root(G))$.
  Let $S = T_p$ with $p = \Pos(S)$. 
  Suppose that $\GLPFd[p] = \ell$ with $p \not= p_*$. Then, there exists some other suffix $T_q$ with $T_q \leqdw T_p$ that gives $lcp(T_p, T_q) = \ell$. Let $w$ be the branching node of $T_p$ and $T_q$ and $f_p, f_q \in \Out(v)$ be outgoing edges through which $T_p$ and $T_q$ go, respectively. Since outgoing edges in $\Out(w)$ are ordered by the edge order $\leqeout$ compatible to $\leqdw$, we have $f_q \lesseout f_p$ (*); Furthermore, since $w$ is the deepest branching point over all $T_q \in \Suf$, $f_p$ is the lowest $\ESD$-edge in $T_p$ (**).
Now, we consider the equivalence between (a)--(c). 
  
(c) $\Rightarrow$ (a): The claim follows immediately in both cases, trivial and non-trivial, by definition of $QI_{\GLPFd}$

(a) $\Rightarrow$ (b): Suppose that  $p \in QI_{\GLPFd}$. Then, there exists some $(+)$-canonical suffix $S = \cano(f)$ for $f \in \CE[+]$. 
(a.i) Trivial case: $f = \vobj[+]f$ and $S = T_{p} = \reprdw(\root(G))$. Then, there is no suffix $T_q$ such that $T_q \lessdw T_p$, and thus, we see that $\ell = \GLPFd[p] = 0$ by definition. Since the empty string $\eps$ is always left-maximal in $T$. Therefore, claim (b) holds. 
(a.ii) Non-trivial case: Suppose that $p\not= p_*$ and $S = T_p = \reprup(w)\cdot X\cdot \reprdw(v)$ for some edge $f_* = (w, X, v) \in \ESD$ such that $\id{pos}(T_p) = p$.
By similar discussion above (*) shows that $f_p = f_*$ is the lowest $\ESD$-edge in $T_p$ and $f_q \lessdw f_p = f_*$ holds for some suffix $f_q$. It immediately follows that $\ell = lcp(T_p, T_q) = |\reprup(w)|$ and $\reprup(w) = T_p[1..\ell]$. As seen in \cref{sec:tech:edge:class}, $\reprup(w)$ is left-maximal.

(b) $\Rightarrow$ (c):
(b.i) Trivial case: Obviously, $\ell = 0$ holds. 
(b.ii) Non-trivial case: $p\not=p_*$ and $S$ contains some $\ESD$-edges. 
Then, by claim (**), we can partition $T_p$ as 
$T_p = U X D$ for the deepest $\ESD$-edge $f_p = (w, X, v)$ in $T_p$. Since the factor $D$ contains no $\ESD$-edges by assumption, we have $D = \reprdw(v)$.
Since $f_p\in \ESD$, there must exist some edge $f_q \in \Out(w)$ with $f_q \lessein f_p$, and thus, some suffix $T_q$ such that $T_q \lessdw T_p$.
If $\ell = lcp(T_p, T_q)$, $T_p[1..\ell]$ must be the prefix $U$ since $w$ is the branching point of $T_p$ and $T_q$. 
Furthermore, if $w$ is the lowest left-branching position, $\ell$ is the largest among such suffixes $T_q$. Thus, we have $\GLPFd[p] = \ell$. If $T_p[1..\ell] = U$ is left-maximal, we have $U = \reprup(w)$
as seen in \cref{sec:tech:edge:class}. By discussions above, we have $T_p = \reprup(w)\cdot X\cdot \reprdw(v) = \cano(f_p)$. Since $|\reprup(w)| = |U| = \ell$, the claim is proved. 
Combining the above arguments, we showed the lemma.
\qed 
\end{proof}

\setlength{\textfloatsep}{.5\baselineskip}
\begin{algorithm}[t]
\renewcommand{\-}{\mbox{\textrm{-}}}
\SetArgSty{textrm}
\SetCommentSty{textit}
\SetKwInput{KwGlobal}{Global variable}
\caption{
  The algorithm for computing the quasi-irreducible GLPF array for a text $T$ from the CDAWG $G$ for $T$ or its self-index. 
}
\label{algo:recirrglpf}
\kw{Procedure} \algo{QIrrGLPF}$(v, QGL)$
\Comment*{Assume path orderings $\Pi = (\leqpos, \leqdw)$}
\Begin{
    \uIf (\Commentblock{Case: trivial suffix at the root}) {$\In(v) = \emptyset$}{
      $QGL \gets QGL \circ (1, 0)$ 
    }
    \Else (\Commentblock{Case: non-trivial suffix at branching node}) {
    \For{each $f = (w, X, v)$ in order $\leqepos[-]$ compatible to $\leqpos$
    } {
      \uIf (\Commentblock{Case: $(+)$-primary}) {$\isprimary_{+}(f)$} { 
        $\algo{QIrrGLPF}(w, QGL)$
      }
      \Else (\Commentblock{Case: $(+)$-secondary}) {
        $\ell \gets |\reprup(w)|$; 
        $p \gets n+1- |\reprup(w)| - |X| - |\reprdw(v)|$\; 
        $QGL \gets QGL\circ (p, \ell)$\Comment*{$output: \GLPF[p] = \ell$}
      }
    }
    }
  }
\end{algorithm}

\subsubsection{Algorithm.}
\label{sec:alglpf:alg}
In \aref{algo:recirrglpf}, we present the recursive procedure for computing the quasi-irreducible GLPF array, $\qir{\GLPF}$, of size $e$ for a text $T[1..n]$ from the self-index $G = \CDAWGd$ of size $O(e)$ under a parameter pair $\Pi = (\lequp, \leqdw)$, when it is invoked with $v = sink(G)$ and $QGL = \eps$.

\begin{theoremrep}\label{thm:alglpf:complexity}
  Let $T[1..n]$ be any text over an integer alphabet $\Sigma$. Given a self-index version of $\CDAWGo$ without a text, \aref{algo:recirrglpf} constructs the quasi-irreducible $\GLPFd$ array for $T$ of size $e$ in $O(e)$ worst-case time and $O(e)$ words of space. 
\end{theoremrep}


By a simple procedure as in~\cite{crochemore2008computing:lpf,navarro:ochoa:2020approx}, we can easily compute either the lex-parse from $\GLPF_{\leqlex}$ or the LZ-parse from $\GLPF_{\leqpos}$ in linear time in combined input and output sizes. Hence, the next theorem follows from \cref{lem:glpf:plcp:lpf} and \cref{thm:algbwt:complexity}. 

\begin{theoremrep}\label{thm:algglpf:parse:lex:lz}
Let $T[1..n]$ be any text over an integer alphabet $\Sigma$.
The lex-parse of size $2r = O(e)$ and the LZ-parse of size $z \le r$ of $T$ can be computed from a self-index version of $\CDAWGo$ without a text for the same text in $O(e)$ worst-case time and words of space. 
\end{theoremrep}

\begin{proof}
We first convert $\CDAWGo$ or $\CDAWGd$ into $\qir\GLPF_{\leqlex}$ and $\qir\GLPF_{\leqpos}$ by \cref{thm:algglpf:main}, and then, we convert $\qir\GLPF_{\leqlex}$ and $\qir\GLPF_{\leqpos}$ into the lex-parse and LZ-parse of $T$, respectively, by~\cref{lem:glpf:plcp:lpf} and \cref{lem:glpf:greedy:parse}. 
Since the sizes of the lex-parse and LZ-parse are bounded from above by $v = 2r \le e$ and $z \le e$, respectively, all computations can be done in $O(e)$ time and space by \cref{thm:algglpf:main} and \cref{lem:glpf:greedy:parse}. This completes the proof. 
\qed  
\end{proof}


\subsection*{Acknowledgments}
The authors thank the anonymous reviewers for their helpful comments which greatly improved the quality and presentation of the paper.
The first author is also grateful to Hideo Bannai for providing information on the literature on sublinear time and space conversion between text indexes, and to Mitsuru Funakoshi for discussing the sensitivity of text indexes for morphic words. 




\end{document}